\newtheorem{theorem}{\textbf{Theorem}}
\newtheorem{myDef}{\textbf{Definition}}
\newtheorem{lemma}{\textbf{Lemma}}
\newtheorem{myremark}{\textbf{Remark}}
\begin{document}
%
\title{Resilient Consensus Through\\ Event-based Communication}
%
%
%

\author{Yuan~Wang
        and~Hideaki~Ishii,~\IEEEmembership{Senior Member,~IEEE}
\thanks{Y.~Wang and H.~Ishii are with the Department
of Computer Science, Tokyo Institute of Technology, Yokohama, Japan.

E-mails: wang.y.bb@m.titech.ac.jp, ishii@c.titech.ac.jp.

This work was supported in the part by the 
JST CREST Grant No.~JPMJCR15K3 and by JSPS under Grant-in-Aid for 
Scientific Research Grant No.~18H01460. The support provided by 
the China Scholarship Council is also acknowledged.}}

\maketitle
\begin{abstract}
We consider resilient versions of discrete-time multi-agent consensus 
in the presence of faulty or even malicious agents in the network. 
In particular, we develop event-triggered update rules which can 
mitigate the influence of the malicious agents and at the same time 
reduce the communication. Each regular agent updates its 
state based on a given rule using its neighbors' information. 
Only when the triggering condition is satisfied, the regular agents 
send their current states to their neighbors. Otherwise, the
neighbors will continue to use the state received the last time.
Assuming that a bound on the number of malicious nodes is known, 
we propose two update rules with event-triggered communication. 
They follow the so-called mean subsequence reduced (MSR) type 
algorithms and ignore values received from potentially malicious 
neighbors. We characterize the necessary
connectivity in the network for the algorithms to perform correctly,
which are stated in terms of the
notion of graph robustness. A numerical example is provided to 
demonstrate the effectiveness of the proposed approach.
\end{abstract}

\begin{IEEEkeywords}
resilient consensus, distributed event-based control, multi-agent systems, 
discrete-time systems
\end{IEEEkeywords}

%
\IEEEpeerreviewmaketitle

\section{Introduction}
\label{S:1}

The study of distributed coordination in multi-agent 
systems has received much attention in a wide range of areas 
including control, robotics, communications, complex networks, and computer science.
More recently, it has been recognized that cyber security 
for networked control systems is a critical issue since
the extensive use of communications for the interactions 
among agents creates 
numerous vulnerabilities for potential attacks (e.g., \cite{Sandberg2015}). 
Control related applications such as those in robotics involve physical aspects,
and hence, different from cyber attacks 
purely in the domain of information technology, 
attacks may lead to damages in equipments or even accidents.

In large-scale multi-agent systems, consensus problems form one of the fundamental 
problems (e.g., \cite{Mesbahi2010}). There, agents interact locally and exchange their information 
with each other in order to arrive at the global objective of sharing a 
common value. In an uncertain environment where faults or even adversarial
attacks can be present, it is of great importance to defend consensus algorithms 
by raising their security levels so as to avoid being influenced 
by such uncertainties in their decision makings. In this context, 
adversarial agents are those that do not follow the given algorithms 
and might even attempt to keep the nonfaulty, regular agents 
from reaching consensus.
It is also remarked that, as a different class 
of cyber attacks, the effects of jamming and DoS
attacks on multi-agent consensus have recently 
been analyzed in \cite{Kikuchi2017,Senejohnny2018}.

In this paper, we study resilient versions of consensus algorithms 
and specifically follow the line of research of
fault-tolerant distributed algorithms in the area of computer science, 
where such problems have long been studied 
(see, e.g., \cite{Defago2016,KieAza94,Lynch1996}).
For each regular agent, a simple but effective approach to mitigate the 
influence of potentially misleading information due to faults and 
cyber attacks is to ignore the agents 
whose states are the most different from its own. It is assumed that 
the nodes know a priori the maximum number $F$ of adversarial agents 
in the network. Hence, it is useful to remove the $F$ largest values 
as well as the $F$ smallest values among those received from the neighbors. 
This class of algorithms is sometimes called the mean subsequence reduced 
(MSR) algorithms and has been employed in computer science 
(e.g., \cite{Mendes2015,Vaidya2012}), 
control theory (e.g., \cite{Dibaji2017,LeBlanc2013,Zhang2015}),
and robotics (e.g., \cite{Guerrero-Bonilla2017,Park2017,Saldana2017}). 
An important recent progress lies in the characterization of the 
necessary requirement on the topology of the agent networks. 
This was initiated by \cite{LeBlanc2013,Vaidya2012}, where
the relevant notion of robust graphs was proposed. 

In this paper, we develop distributed protocols for resilient consensus 
with a particular emphasis on the communication loads for node interactions. 
We reduce the transmissions in MSR algorithms
through the so-called event-triggered protocols 
(e.g., \cite{Heemels2012}).
Event-based protocols have been developed for conventional consensus 
without malicious agents in, e.g., 
\cite{Dimarogonas2012,Guo2014,Kia2015,Ma2017,MengChen2013,Meng2017,Seyboth2013}.
Related results can be found in \cite{Kadowaki2015}, where event-based
consensus-type algorithms are developed for the synchronization of 
clocks possessed by the nodes in wireless sensor networks (WSNs).

Under this method, nodes make transmissions only when necessary 
in the sense that their values sufficiently changed since their 
last transmissions. In certain cases, the agents may make only
a finite number of transmissions to neighbors. 
The advantage is that the communication can be greatly reduced 
in frequency and may be required only a finite number of times, 
while the tradeoff is that the achievable level 
of consensus may be limited, leaving some gaps in the agents' values.
Time-triggered protocols may be a simpler way to reduce the
communication load, but will not be able to determine when 
to stop the communication.
More concretely, we develop two protocols for resilient consensus 
under event-based communication. Their convergence properties 
are analyzed, and the requirement for the network topology is 
fully characterized in terms of robust graphs. We will show 
through a numerical example how the two protocols differ in 
the amounts of communication needed for achieving consensus.

The difficulty in applying event-triggered protocols in the context 
of resilient consensus based on MSR algorithms
is due to the handling of the errors 
between the current values and their last transmitted ones. 
In our approach, we treat such errors as noise in the system. 
This approach can be seen as an extension of \cite{Kikuya2018},
where a resilient version of the WSN clock synchronization problem 
in \cite{Kadowaki2015}
mentioned above is analyzed; the exchange of two clock variables 
creates decaying noises in the consensus-type algorithms. 
By contrast, in our problem setting, the errors are due to triggering 
and do not entirely decay to zero. Moreover, we study a different class of adversarial 
nodes as we clarify later. 

Another feature of this paper is that we deal with event-triggered
protocols for consensus algorithms in the discrete-time domain. 
This is in contrast to the conventional works that
deal with event-based consensus in continuous time
(e.g., \cite{Dimarogonas2012,Kia2015,Ma2017,Seyboth2013}).
In such cases, the agents must continuously monitor their states 
to detect when their states reach the thresholds for
triggering events. This mechanism may require special resources for computation. 
Furthermore, events with short intervals may occur, which can
result in undesirable Zeno behaviors. 
On the other hand, there are works such as \cite{Guo2014,MengChen2013,Meng2017},
where sampled-data controllers are employed for agents with
system dynamics in continuous time.

It is interesting to note that in discrete time, 
event-based consensus algorithms must be designed differently. This issue
has also been discussed in \cite{Kadowaki2015}, which essentially
deals with discrete-time asynchronous update rules without adversaries.
It is emphasized that in the presence of attacks, this aspect seems 
even more crucial.
In this paper, we present two resilient consensus algorithms,
but also discuss a third potential approach. 
The differences among them are modest: At the updates, each 
agent has the option of using its own state or its own last 
transmitted state. We will however see that analysis methods 
can differ, leading to various levels of conservatism 
in the bounds on the parameters for the event triggering functions.

Before we close this introduction, 
we would like to briefly discuss the recent advances 
in the research of MSR algorithms.
The early works \cite{LeBlanc2013,Vaidya2012} dealt with first-order 
agents with synchronous updates. 
In \cite{Dibaji2017}, 
MSR-type algorithms are developed
for agents having second-order dynamics, which may hence be
applicable to autonomous vehicles, 
and moreover, delays in communication 
as well as asynchronous updates are taken into account. 
The work \cite{LeBlanc2017} studied the MSR-based resilient 
synchronization problem in a more general setting with agents 
having higher-order dynamics, operating
in continuous time.
While most studies mentioned so far deal with agents whose
states take real values, the work \cite{Dibaji2018} considers
agents with quantized (i.e., integer-valued) states.
Also, there is a line of graph theoretic studies 
(e.g., \cite{Usevitch2017,Zhang2015,Zhao2017}),
which discuss methods to identify the robustness of certain classes 
of graphs with specified levels of robustness, for both undirected 
and directed graphs.



The MSR-based approach is found useful in addressing distributed
problems outside of consensus problems as well. 
In \cite{Sundaram2018}, a resilient version of 
distributed optimization is studied by employing MSR-like 
mechanisms to detect outliers in the neighbors' variables. 
Further, in \cite{Mitra2018}, resilient distributed state estimation problem 
is studied, where another class of robust graphs relevant to the problem
is introduced. 
In the robotics area, \cite{Guerrero-Bonilla2017} applies MSR algorithms 
for cooperative robots and develops methods for the robots to find
if and how the network for their interactions can be built 
with robust graph properties. 

It should highlight that MSR algorithms do not aim to detect 
adversarial nodes as they simply leave out 
the values that are the most different. Efforts have been made to 
develop distributed algorithms to detect and identify the adversaries 
while performing the given task. For example, 
in \cite{pasqualetti2012,sundaram2011}, detection techniques
using unknown input observers are proposed, in which case 
the initial values of all normal nodes can be found though 
they require the global knowledge of the network topology. 
Methods based on more local information can be found in, e.g., 
\cite{Chen2018,guo2012,zhao2018}.

This paper is organized as follows. In Section~\ref{S:2}, 
we introduce some preliminaries and then formulate the event-based 
resilient consensus problem. We propose two event-based resilient 
update rules and study their convergence and necessary network 
structures in Sections~\ref{S:3} and~\ref{S:4}. 
A numerical example is given in Section~\ref{S:5} to demonstrate 
the effectiveness of the proposed algorithms. 
We provide concluding remarks in Section~\ref{S:6}.
This paper is an extended version of \cite{Wang2018} with 
full proofs of the results and further discussions.

\section{Event-based resilient consensus problem}
\label{S:2}

\subsection{Preliminaries on graphs}
Some basic notations related to graphs are introduced 
for the analysis in this paper.

Consider the directed graph $\mathcal{G}=(\mathcal{V},\mathcal{E})$ consisting of $n$ nodes. Here the set of nodes is denoted by $\mathcal{V}=\{1,2,\ldots,n\}$ and the edge set by $\mathcal{E} \subseteq \mathcal{V}\times\mathcal{V}$. The edge $(j,i)\in \mathcal{E}$ indicates that node $j$ can send a message to node $i$ and is called an incoming edge of node $i$. 
Let $\mathcal{N}_{i}=\{j:(j,i)\in \mathcal{E}\}$ be the set 
of neighbors of node $i$. The number of neighbors of node~$i$ 
is called its degree and is denoted as $d_i = |\mathcal{N}_i|$
The path from node $i_1$ to node $i_p$ is denoted as the sequence $(i_1,i_2,\ldots,i_p)$, where $(i_j,i_{j+1})\in \mathcal{E}$ for $j=1,2,\ldots,{p-1}$. The graph $\mathcal{G}$ is said to have a spanning tree if there exists a node from which there is a path to all other nodes of this graph.

To establish resilient consensus results, an important topological notion is that of robustness of graphs \cite{LeBlanc2013}.

\begin{myDef} \label{Def1}\rm
The graph $\mathcal{G}=(\mathcal{V},\mathcal{E})$ is called 
\emph{$(r,s)$-robust} ($r,s < n$) if for any two nonempty disjoint subsets $\mathcal{V}_{1},\mathcal{V}_{2}\subseteq\mathcal{V}$, one of 
the following conditions is satisfied:
1)~$\mathcal{X}^{r}_{\mathcal{V}_{1}}=\mathcal{V}_{1}$,
2)~$\mathcal{X}^{r}_{\mathcal{V}_{2}}=\mathcal{V}_{2}$, and
3)~$|\mathcal{X}^{r}_{\mathcal{V}_{1}}|+|\mathcal{X}^{r}_{\mathcal{V}_{2}}|\geq s$,
where $\mathcal{X}^{r}_{\mathcal{V}_{i}}$ is the set of all nodes in $\mathcal{V}_{i}$ which have at least $r$ neighbors outside $\mathcal{V}_{i}$ for $i=1,2$.
The graph is said to be $r$-robust if it is $(r,1)$-robust. 
\end{myDef}

\begin{figure}[t]
\centering
\includegraphics[width=0.4\linewidth]{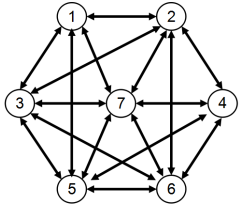}
\vspace*{-2mm}
\caption{Network topology with $(3,3)$-robustness}
\label{fig2}
\vspace*{-4mm}
\end{figure}

In Fig.~\ref{fig2}, we display an example graph 
with seven nodes. 
It can be checked to have just enough connectivity 
to be (3,3)-robust. 
This level of robustness is lost 
if any edge is removed.

We summarize some basic properties of robust graphs 
\cite{LeBlanc2013}. 

\begin{lemma} \label{lemma1}\rm
An $(r,s)$-robust graph $\mathcal{G}$ satisfies the following:
\begin{enumerate}
  \item $\mathcal{G}$ is $(r',s')$-robust, where 
        $0 \le r' \le r$, $1 \le s' \le s$, and 
        in particular, it is $r$-robust.
  \item $\mathcal{G}$ has a directed spanning tree. 
        Moreover, it is $1$-robust if and only if it 
        has a directed spanning tree.
  \item $r \le \lceil n/2 \rceil$, 
        where the ceil function $\lceil y \rceil$ 
        gives the smallest integer greater than or equal to $y$.
        Furthermore, it holds 
        $r = \lceil n/2 \rceil$ if and only if $\mathcal{G}$ is a 
        complete graph.
  \item The degree $d_i$ for $i \in \mathcal{V}$ is lower bounded 
        as $d_i \ge r+s-1$ if $s<r$ and $d_i \ge 2r-2$ if $s \ge r$.
\end{enumerate}
Moreover, a graph $\mathcal{G}$ is $(r,s)$-robust if it is $(r+s-1)$-robust.
\end{lemma}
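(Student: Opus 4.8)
The plan is to verify the five assertions one at a time, in each case working directly from Definition~\ref{Def1} and exploiting the monotonicity of the sets $\mathcal{X}^{r}_{\mathcal{W}}$ in the parameter $r$: since having at least $r$ neighbors outside $\mathcal{W}$ forces having at least $r'$ such neighbors whenever $r'\le r$, one has $\mathcal{X}^{r}_{\mathcal{W}}\subseteq\mathcal{X}^{r'}_{\mathcal{W}}$ for every set $\mathcal{W}$. I would also record the reformulation that, when $s=1$, conditions 1)--3) collapse to the single requirement $|\mathcal{X}^{r}_{\mathcal{V}_{1}}|+|\mathcal{X}^{r}_{\mathcal{V}_{2}}|\ge 1$, since a nonempty set satisfying condition 1) already contributes at least one node to the sum.

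For item~1 I would take any disjoint nonempty $\mathcal{V}_{1},\mathcal{V}_{2}$ and push the known $(r,s)$-condition through the inclusion above: if $\mathcal{X}^{r}_{\mathcal{V}_{1}}=\mathcal{V}_{1}$ then $\mathcal{X}^{r'}_{\mathcal{V}_{1}}=\mathcal{V}_{1}$ as well, and if $|\mathcal{X}^{r}_{\mathcal{V}_{1}}|+|\mathcal{X}^{r}_{\mathcal{V}_{2}}|\ge s$ then the corresponding sum for $r'$ is at least $s\ge s'$; choosing $r'=r,\,s'=1$ yields $r$-robustness. For item~2, I first use item~1 to reduce to $1$-robustness and then argue its equivalence with the spanning-tree property. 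If $\mathcal{G}$ has no spanning tree, the condensation into strongly connected components has at least two source components $\mathcal{V}_{1},\mathcal{V}_{2}$, neither of which receives an incoming edge from outside, so $\mathcal{X}^{1}_{\mathcal{V}_{1}}=\mathcal{X}^{1}_{\mathcal{V}_{2}}=\emptyset$ and $1$-robustness fails; conversely, if a root $v$ reaches every node, then two disjoint sets cannot both be closed under incoming edges (the one avoiding $v$ could never be reached), so at least one has a node with an outside neighbor, giving $1$-robustness.

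For item~3, the upper bound follows by testing the balanced partition $\mathcal{V}=\mathcal{V}_{1}\sqcup\mathcal{V}_{2}$ with $|\mathcal{V}_{1}|=\lfloor n/2\rfloor$ and $|\mathcal{V}_{2}|=\lceil n/2\rceil$: whichever condition holds, some node must have at least $r$ neighbors in the opposite set, and this set has size at most $\lceil n/2\rceil$, so $r\le\lceil n/2\rceil$. That the complete graph attains this is a direct count, since in $K_{n}$ a node of a size-$k$ set has exactly $n-k$ neighbors outside, so the balanced split always meets condition 1) or 2). The converse, that equality forces completeness, is the more delicate direction; here I would argue contrapositively that a missing edge permits constructing disjoint sets on which all three conditions fail, and I expect this construction to be the main obstacle.

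For the degree bound in item~4, the easy half $d_{i}\ge r$ comes from the pair $(\{i\},\mathcal{V}\setminus\{i\})$, for which $\mathcal{X}^{r}_{\mathcal{V}\setminus\{i\}}=\emptyset$ when $r\ge 2$, leaving condition 1) and hence $d_{i}\ge r$; extracting the sharper values $r+s-1$ and $2r-2$ requires a genuinely global extremal argument over all admissible partitions that brings in the parameter $s$, and this is where I anticipate the real difficulty. Finally, the concluding statement I would prove by contradiction: assuming $(r+s-1)$-robustness together with a pair $\mathcal{V}_{1},\mathcal{V}_{2}$ violating the $(r,s)$-conditions, delete from each set the nodes of $\mathcal{X}^{r}_{\mathcal{V}_{i}}$ to obtain nonempty $\mathcal{V}_{1}',\mathcal{V}_{2}'$ (nonempty because conditions 1)--2) failed). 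Every remaining node then has at most $(r-1)+|\mathcal{X}^{r}_{\mathcal{V}_{i}}|$ neighbors outside $\mathcal{V}_{i}'$, so applying $(r+s-1)$-robustness to $(\mathcal{V}_{1}',\mathcal{V}_{2}')$ produces a node with at least $r+s-1$ outside neighbors, forcing $|\mathcal{X}^{r}_{\mathcal{V}_{i}}|\ge s$ and contradicting $|\mathcal{X}^{r}_{\mathcal{V}_{1}}|+|\mathcal{X}^{r}_{\mathcal{V}_{2}}|\le s-1$.
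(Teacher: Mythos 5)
The paper does not actually prove Lemma~\ref{lemma1}: it is stated as a summary of known facts and attributed to \cite{LeBlanc2013}, so there is no in-paper argument to compare against, and I am judging your proposal against the standard proofs. The parts you carry out in full are correct and essentially the standard ones: the monotonicity $\mathcal{X}^{r}_{\mathcal{W}}\subseteq\mathcal{X}^{r'}_{\mathcal{W}}$ for item~1, the source-strongly-connected-component argument for item~2, the balanced partition giving $r\le\lceil n/2\rceil$ and the counting argument showing the complete graph attains it in item~3, the bound $d_i\ge r$ from the pair $(\{i\},\mathcal{V}\setminus\{i\})$, and, in particular, your proof of the final assertion (delete $\mathcal{X}^{r}_{\mathcal{V}_i}$ from each set, apply $(r+s-1,1)$-robustness to the residual pair, and count outside neighbors) is complete and is exactly the argument used in the literature.

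That said, two of the five claims are left genuinely unproved, and you flag them yourself. For the converse half of item~3 (that $r=\lceil n/2\rceil$ forces completeness), "a missing edge permits constructing disjoint sets on which all three conditions fail" is not yet an argument: the obvious witnesses do not work, since for a missing edge $(j,i)$ the node $i$ can still have $n-2\ge\lceil n/2\rceil$ neighbors outside a small set containing it once $n\ge 4$, so one must exhibit a full partition $\mathcal{V}=\mathcal{V}_1\sqcup\mathcal{V}_2$, with $i$ and $j$ placed on opposite sides and the sizes tuned to $\lfloor n/2\rfloor$ and $\lceil n/2\rceil$, for which \emph{every} node of each part has strictly fewer than $\lceil n/2\rceil$ neighbors on the other side; that construction is the content of the claim and is missing. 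For item~4, the pair $(\{i\},\mathcal{V}\setminus\{i\})$ can only ever produce $d_i\ge r$, which is strictly weaker than $r+s-1$ when $s\ge 2$ and than $2r-2$ when $r\ge s\ge 2$; obtaining the dependence on $s$ requires testing a different pair, e.g.\ $\{i\}$ against a suitably sized subset of the non-neighbors of $i$, chosen so that conditions 1) and 2) of Definition~\ref{Def1} both fail and condition 3) then forces the stated lower bound on $d_i$. Until those two constructions are written out, items~3 and~4 are asserted rather than proved.
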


In consensus problems, the property 2) in the lemma is 
of interest. Robust graphs may not be strongly connected in general,
but this property indicates that the notion of robust graphs 
is a generalization of graphs
containing directed spanning trees, 
which are of great relevance in the literature of 
consensus \cite{Mesbahi2010}.
As we will see, robust graphs play a key role in characterizing 
the necessary network structure for achieving resilient consensus. 
It should however be noted that checking the robustness of a given 
graph involves combinatorial computation and is thus difficult 
in general \cite{Usevitch2017,Zhang2015,Zhao2017}.

\vspace*{-2mm}
\subsection{Event-based consensus protocol}

We consider the directed graph $\mathcal{G}$ of $n$ nodes. The nodes in $\mathcal{V}$ are partitioned into two sets:  $\mathcal{R}$ denotes the set of regular nodes and $\mathcal{A} = \mathcal{V}\setminus \mathcal{R}$ represents the set of adversarial nodes. The regular nodes will follow the designed algorithm exactly while the adversarial nodes can have different update rules from that of the regular nodes. The attacker is allowed to know 
the states of the regular nodes and
the graph topology, and to choose any node
as a member of $\mathcal{A}$ under some constraints.

We introduce the event-based protocol for the regular nodes to 
achieve consensus. It can be outlined as follows: 
At each discrete-time instant $k \in \mathbb{Z}_+$, 
the nodes make updates, but whether they transmit their 
current values to neighbors depends on the triggering 
function. 
More concretely, each node $i$ has an auxiliary 
variable which is its state value communicated the last 
time and compares it with its own current state. 
If the current state has changed sufficiently, then it 
will be sent to its neighbors and the auxiliary variable 
will be replaced.

The update rule for agent $i$ is described by
\begin{equation}
  {x_i}(k + 1) 
   =   {x_i}(k) + {u_i}(k),
\label{eqn:x_i}
\end{equation}
where $x_i(k) \in \mathbb{R}$ is the state 
and $u_i(k)$ is the control given by
\begin{equation}
  {u_i}(k)
   = \sum\limits_{j \in \mathcal{N}_i }  a_{ij}(k) 
         \left( \hat{x}_j(k) - x_i(k) \right).
\label{eqn:u_i}
\end{equation}
Here, $\hat{x}_j(k)\in \mathbb{R}$ is an auxiliary state,
representing the last communicated state of node $j$ at time $k$.
It is defined as 
\[
  \hat{x}_j(k)=x_j(t_{l}^{j}),~~
   k \in [t_{l}^{j},t_{l+1}^{j}), 
\]
where $t_{0}^{j},t_{1}^{j}, \ldots$ denote the 
transmission times of node $j$ determined by the triggering function to be given below. The initial values $x_i(0)$, $\hat{x}_j(0)$ are given,
and $a_{ij}(k)$ is the 
weight for the edge $(j,i)$.
Also, let 
$a_{ii}(k) 
   = 1 - \sum_{j \in\mathcal{N}_i}
           a_{ij}(k)$.
Assume that
$\gamma \leq a_{ij}(k)<1$ if $a_{ij}(k)\neq 0$
or if $i=j$ for $i,j\in\mathcal{V}$,
where $\gamma$ is the lower bound with 
$0<\gamma\leq 1/2$. 
In the resilient consensus algorithms to be introduced, 
the neighbors 
whose values are used for updates change over time,
and hence, the weights $a_{ij}(k)$ are time varying.
The update rule above can be seen as a discrete-time 
counterpart of the event-based consensus algorithms 
in, e.g., \cite{Guo2014,Ma2017,Seyboth2013}.

We now introduce the triggering function. 
Denote the error at time $k$ between the updated state $x_i(k+1)$ 
and the auxiliary state $\hat{x}_i(k)$ 
by $e_i(k) = \hat{x}_i(k) - x_i(k+1)$ for $k\geq 0$.
Then, let
\begin{equation}
  {f_i}(k) 
    = \left| 
        {{e_i}(k)} 
      \right| 
     - \left( 
         {{c_0} + {c_1}{\text{e}^{ - \alpha k}}} 
       \right),
\label{eq-02}
\end{equation}
where $c_{0}$, $c_{1}$, and $\alpha>0$ are positive constants.
If $f_i(k)>0$, agent~$i$ transmits its new state 
$x_i(k+1)$ to the neighbors at time $k$. 
This mechanism will be discussed further later. 

\vspace*{-2mm}
\subsection{Adversary model and resilient consensus}

For each adversarial node $i$ in the set $\mathcal{A}$, 
its state $x_i(k)$ is updated as in 
\eqref{eqn:x_i}, but its control $u_i(k)$ can take
arbitrary values at any $k$. Such nodes may have knowledge
on the entire network including its topology, the values of
all normal nodes, and their update rules. In this respect,
we take account of their worst-case behaviors. 
Specifically, we employ the malicious model 
introduced in \cite{LeBlanc2013} as follows:

\begin{myDef} \label{Def2}\rm
(\emph{Malicious nodes}): An adversarial 
node~$i$ is said to be malicious if it updates
its state $x_i(k)$ in \eqref{eqn:x_i}
by arbitrarily choosing $u_i(k)$ and
sends the state $x_i(k)$ to all of its neighbors 
at each transmission.
\end{myDef}

Adversarial nodes more difficult to deal with are 
those that can transmit different values to different neighbors.
Such nodes are referred to as being
Byzantine \cite{Vaidya2012}. The motivation for considering 
malicious nodes as defined above comes, for example, 
from the applications of WSNs, 
where sensor nodes communicate to their neighbors by broadcasting their data. 

We also set a bound on the number of malicious nodes in the network. 
In this paper, we will deal with networks of the so-called $F$-total 
model as defined below. 

\begin{myDef} \label{Def3}\rm
(\emph{$F$-total model}): 
For $F \in \mathbb{N}$, we say that the adversarial set $\mathcal{A}$ 
follows an \emph{$F$-total} model 
if $|\mathcal{A}|\leq F$.
\end{myDef}

Let the number of malicious agents be denoted by $N_m=|\mathcal{A}|$. 
Then, let $N=|\mathcal{V}|-N_m$ be the number of regular agents. 

Now, we introduce the notion of resilient consensus for multi-agent systems.

\begin{myDef} \label{Def4}\rm
 (\emph{Resilient consensus}): 
Given $c\geq 0$,
if for any possible sets and behaviors of the malicious agents
 and any initial state values of the regular nodes, the following conditions are satisfied, then the multi-agent system is said to reach resilient consensus
at the error level $c$:
\begin{enumerate}
  \item Safety condition: There exists an interval $\mathcal{S} \subset \mathbb{R}$ such that  $x_i(k)\in \mathcal{S}$ for all $ i \in \mathcal{R}$, $k\in \mathbb{Z}_{+}$.
  \item Consensus condition: 
  For all $i, j \in \mathcal{R} $, it holds 
  that $\mathop {\limsup }_{k \to \infty } \left| x_i(k) - x_j(k) \right| \le c$.
\end{enumerate}
\end{myDef}

In this paper, we would like to design event-based 
update rules for the regular agents to reach resilient 
consensus at a prespecified error level $c$
under the $F$-total model
by using only local information obtained from their neighbors

\section{Robust protocols for event-based consensus}
\label{S:3}

\subsection{E-MSR algorithm}

In this section, we outline a distributed protocol to solve 
the resilient consensus problem. As discussed above, every node 
makes an update at every time step in a synchronous manner, 
but only when an event happens, the auxiliary values will 
be updated and then sent to neighbors. The basis of the algorithm follows
those in the works of, e.g., 
\cite{Dibaji2017,LeBlanc2013}. 
The algorithm in this paper is called the 
event-based mean subsequence reduced (E-MSR) algorithm.

The E-MSR algorithm has four steps as follows:
\begin{enumerate}
  \item \emph{(Collecting neighbors' information)} At each time step $k$, every regular node $i\in\mathcal{R}$ uses the values $\hat{x}_j(k), j\in \mathcal{N}_i$, most recently communicated from the neighbors as well as its own value ${x}_i(k)$ and sorts them from the largest to the smallest.
  \item \emph{(Deleting suspicious values)} Comparing with ${x}_{i}(k)$, node $i$ removes the $F$ largest and $F$ smallest values from its neighbors. If the number of values larger or smaller than ${x}_{i}(k)$ is less than \emph{F}, then all of them are removed. The removed data is considered as suspicious and will not be used in the update. The set of the node indices of the remaining values is written as $\mathcal{M}_{i}(k)\subset \mathcal{N}_{i}$.
  \item \emph{(Local update)} Node $i$ updates its state by
  \begin{equation}\label{eq-03}
    {x_i}(k + 1) 
     = {x_i}(k) 
        + \sum\limits_{j \in \mathcal{M}_{i}(k)} a_{ij}(k)
          \left( 
            \hat{x}_j(k) - x_i(k) 
          \right).
  \end{equation}
  \item \emph{(Communication update)} Node $i$ checks if 
     its own triggering function $f_i(k)$ in \eqref{eq-02} is 
     positive or not. Then, it sets $\hat{x}_i(k+1)$ as
    \begin{equation}\label{eq-04}
       \hat{x}_i(k+1) 
          = \begin {cases}
               x_i(k+1) & \text{if}~f_i(k)>0, \\
               \hat{x}_i(k) & \text{otherwise}.
            \end {cases}
    \end{equation}
\end{enumerate}

The communication rule in this algorithm shows that only when the current value has varied enough to exceed a threshold, then the auxiliary variable will be updated, and only at this time the node sends its value to its neighbors. This event triggering scheme can significantly reduce the communication burden as we will see in the numerical example
in Section~\ref{S:5}.

\vspace*{-2mm}
\subsection{Protocol~1} \label{sec:protocol1}
The first protocol of this paper is the E-MSR algorithm 
as stated above, which will be referred to as Protocol~1. 
We are now ready to present our main result for this protocol.

We introduce two kinds of 
minima and maxima of the states of the regular agents:
The first involves only the states as 
$\overline{x}(k) = \mathop {\max }_{i\in\mathcal{R}} x_i(k)$ 
and $\underline{x}(k) = \mathop {\min }_{i\in\mathcal{R}} x_i(k)$
while the second uses also the auxiliary variables as
$\underline{\hat{x}}(k) = \min_{i\in\mathcal{R}} \{x_{i}(k), \hat{x}_{i}(k)\}$
and $\overline{\hat{x}}(k) = \max_{i\in\mathcal{R}} \{x_{i}(k), \hat{x}_{i}(k)\}$.
The safety interval $\mathcal{S}$ is chosen as 
$\mathcal{S} = \left[ \underline{\hat x}(0), \overline{\hat x}(0) \right]$.
It is noted that at initial time, 
$\hat{x}_i(0)$ need not be the same as $x_i(0)$.

\begin{theorem} \label{theorem1}\rm
Under the $F$-total model, the regular agents with E-MSR using 
\eqref{eq-03} and \eqref{eq-04} reach resilient consensus 
at an error level $c$ if 
and only if the underlying graph is $(F+1,F+1)$-robust. 
The safety interval is given by 
$\mathcal{S} = \left[ \underline{\hat{x}}(0), \overline{\hat{x}}(0) \right]$, 
and the consensus error level $c$ is achieved if
the parameter $c_0$ in the triggering function \eqref{eq-02}
satisfies
\begin{equation}
  c_0 
   \leq \frac{\gamma^N c}{4N}.
\label{eqn:thm1}
\end{equation}
\end{theorem}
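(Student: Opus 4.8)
\emph{Necessity.} I would establish the ``only if'' direction by contraposition. Suppose $\mathcal{G}$ is not $(F+1,F+1)$-robust. By Definition~\ref{Def1} there are nonempty disjoint sets $\mathcal{V}_1,\mathcal{V}_2$ for which none of the three conditions holds; in particular $\mathcal{X}^{F+1}_{\mathcal{V}_1}\neq\mathcal{V}_1$, $\mathcal{X}^{F+1}_{\mathcal{V}_2}\neq\mathcal{V}_2$, and $|\mathcal{X}^{F+1}_{\mathcal{V}_1}|+|\mathcal{X}^{F+1}_{\mathcal{V}_2}|\leq F$. I would take the adversarial set to be exactly these boundary nodes, $\mathcal{A}=\mathcal{X}^{F+1}_{\mathcal{V}_1}\cup\mathcal{X}^{F+1}_{\mathcal{V}_2}$, which is admissible under the $F$-total model of Definition~\ref{Def3}, and have each malicious node hold a constant value as in Definition~\ref{Def2}. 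Assigning every \emph{regular} node of $\mathcal{V}_1$ the value $a$, every regular node of $\mathcal{V}_2$ the value $b$ with $a-b>c$, and matching auxiliary variables, each regular $i\in\mathcal{V}_1$ has at most $F$ neighbors outside $\mathcal{V}_1$, all with values $\leq a$; the ``deleting suspicious values'' step therefore discards all of them, so $x_i(k+1)=a$, and inductively the two groups never move. Since conditions~1) and~2) fail, regular nodes remain in both $\mathcal{V}_1$ and $\mathcal{V}_2$, and $\limsup_k|x_i(k)-x_j(k)|=a-b>c$, violating the consensus condition of Definition~\ref{Def4}.

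\emph{Sufficiency: safety and the error bound.} Assume $(F+1,F+1)$-robustness. First I would prove the safety claim. For a regular node $i$, after filtering, any retained value exceeding $\overline{\hat x}(k)$ would be accompanied by the $F$ removed larger values, giving $F+1$ values from distinct nodes all at least as large; since at most $F$ nodes are adversarial, one is regular, contradicting the definition of $\overline{\hat x}(k)$. The symmetric argument bounds retained values below by $\underline{\hat x}(k)$. Hence \eqref{eq-03} expresses $x_i(k+1)$ as a convex combination of quantities in $[\underline{\hat x}(k),\overline{\hat x}(k)]$, and the update \eqref{eq-04} keeps $\hat x_i(k+1)$ in that interval as well. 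Thus $\overline{\hat x}(k)$ is non-increasing and $\underline{\hat x}(k)$ non-decreasing; both converge, to limits $\overline A\geq\underline A$, and $x_i(k)\in\mathcal{S}=[\underline{\hat x}(0),\overline{\hat x}(0)]$ throughout. Next, from the triggering function \eqref{eq-02}, a case analysis on whether an event fired at step $k-1$ yields $|\hat x_i(k)-x_i(k)|\leq c_0+c_1\mathrm{e}^{-\alpha(k-1)}$ for every regular $i$ and $k\geq1$, so the discrepancy between auxiliary and current states is uniformly bounded and asymptotically no larger than $c_0$.

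\emph{Sufficiency: consensus contraction.} It remains to show $\overline A-\underline A\leq c$. I would argue by contradiction, supposing $\overline A-\underline A>c$. For large $k$ all states and auxiliary variables lie in a thin band around $[\underline A,\overline A]$ and the error above is essentially $c_0$. Defining the extreme ``top'' and ``bottom'' sets with an appropriate interior margin, conditions~1) and~2) of Definition~\ref{Def1} are ruled out, so condition~3) forces a regular node in one extreme set with at least $F+1$ external neighbors. After deleting $F$ values this node must retain a value from the non-extreme region, which up to the error $c_0+c_1\mathrm{e}^{-\alpha(k-1)}$ is strictly interior; since its weight is at least $\gamma$, its next value moves inward by at least $\gamma$ times the current margin, minus an error of order $c_0$. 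Propagating this across the at most $N$ regular nodes contracts the band by a factor controlled by $\gamma^N$ while accumulating an error of order $Nc_0$, and in the limit
\[
 \overline A-\underline A \;\leq\; \frac{4N c_0}{\gamma^{N}}.
\]
The hypothesis \eqref{eqn:thm1} then gives $\overline A-\underline A\leq c$, so the consensus condition holds.

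\emph{Main obstacle.} The combinatorial robustness step is the standard MSR mechanism; the genuinely new difficulty is Part~3, namely handling the \emph{non-vanishing} triggering errors. In the classical analysis the perturbations decay and the gap closes to zero, whereas here the errors persist at level $c_0$, so the contraction cannot eliminate the gap. The delicate work is to quantify how much this persistent error re-inflates the extreme sets at each of the $N$ propagation steps and to verify that the accumulated inflation is dominated by $4Nc_0/\gamma^{N}$ — precisely the quantity that \eqref{eqn:thm1} is designed to absorb. Getting this error bookkeeping right, rather than the robustness argument itself, is where I expect the real effort to lie.
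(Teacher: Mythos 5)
Your proposal is correct and follows essentially the same route as the paper: the explicit necessity construction you sketch is the one the paper delegates to \cite{LeBlanc2013}, your safety argument via filtering and convex combinations matches the paper's induction on the invariance of $\mathcal{S}$, and your consensus step is exactly the paper's mechanism — robustness supplies a regular node with $F+1$ neighbors outside an extreme set, that node moves inward by at least $\gamma$ times the margin minus an $O(c_0+c_1\mathrm{e}^{-\alpha k})$ error, and iterating over $N$ steps gives a contraction factor $1-\gamma^N/2$ with accumulated error $2Nc_0$ per round, whose fixed point is $4Nc_0/\gamma^N$. The ``bookkeeping'' you defer is precisely what the paper carries out with the drifting sequences $\overline{x}_0(k),\underline{x}_0(k)$, the shrinking margin $\varepsilon_0(k)$ with $\varepsilon+\sigma=1/2$ and $\sigma<\gamma^N\varepsilon/(1-\gamma^N)$, and the check that regular nodes outside an extreme set cannot re-enter it.
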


\vspace*{.5mm}
\begin{proof}
(Necessity)
This essentially follows from \cite{LeBlanc2013}, which considers 
the special case without the triggering function, that is, $c_0=c_1=0$.

(Sufficiency)
We first show that the interval 
$\mathcal{S} = \left[ \underline{\hat{x}}(0), \overline{\hat{x}}(0) \right]$ 
satisfies the safety condition by induction. 
Note that the update rule 
\eqref{eq-03} can be rewritten as
\begin{equation} \label{eq-n01}
  x_i(k+1) 
    = a_{ii}(k){x_i}(k)
       + \sum\limits_{j \in {\mathcal{M}_i}(k)} 
             a_{ij}(k) \hat{x}_j(k),
\end{equation}
where
$a_{ii}(k) 
   = 1 - \sum_{j \in {\mathcal{M}_i}(k)} 
           a_{ij}(k)$.
%
At time $k=0$, it is clear by definition that 
$x_i(0),\hat{x}_i(0)\in \mathcal{S}$, $i \in \mathcal{R}$. 
Suppose that for each regular agent~$i$, $x_i(k),\hat{x}_i(k)\in \mathcal{S}$. 
Then, for agent~$i$, its neighbors in $\mathcal{M}_i(k) $ 
take values only in $\mathcal{S}$, since there are agents
with values outside $\mathcal{S}$ at most $F$, and they are
ignored in step~2 of the E-MSR. 
From \eqref{eq-n01}, we have 
${x_i}(k+1) \in \mathcal{S}$. 
Moreover, by \eqref{eq-04}, it follows that
$\hat{x}_i(k+1) \in \mathcal{S}$. 
Thus, $\mathcal{S}$ is the safety interval. 

We next establish the consensus condition. 
Note that for time $k \in (t_{l}^{i},t_{l+1}^{i})$ between 
two triggering instants, we have $f_i(k)\leq 0$. 
Moreover, for the neighbor node $j \in \mathcal{N}_i$, 
if $f_j(k)>0$, then we have $\hat x_j(k+1) = x_j(k+1)$. 
If $f_j(k) \leq 0$, then 
$\hat x_j(k+1) = \hat x_j(k) = x_j(k+1)+{e_j}(k)$. 
As a result, it holds 
$\hat x_j(k) =  x_j(k)+\hat{e}_j(k-1)$ for $k\geq 1$, 
where
\begin{equation*}
  \hat{e}_j(k) 
    = \begin {cases}
        e_j(k) & \text{if}\ f_j(k) \leq 0, \\
        0 &  \text{otherwise}.
      \end{cases} 
\end{equation*}
Note that
\begin{equation}\label{eq-**01}
  |\hat{e}_j(k)| 
   \le c_0 + c_1\text{e}^{-\alpha k},\ \ \forall k \ge 0.
\end{equation}
Then, we can write \eqref{eq-n01} as
\begin{flalign}\label{eq-06}
  x_i(k + 1)
  & = a_{ii}(k){x_i}(k) 
        + \sum\limits_{j \in \mathcal{M}_i(k) } 
                 a_{ij}(k) \left(x_j(k) + \hat{e}_j(k-1) \right).
\end{flalign}
This can be bounded by using the maximum state $\overline{x}(k)$ as
\begin{flalign}\label{eq-08}
 x_i(k + 1)
  &\le a_{ii}(k)\overline x (k) 
       + \sum\limits_{j \in {\mathcal{\mathcal{M}}_i}(k)} 
                 a_{ij}(k) \left(\overline x (k)  + \hat{e}_j(k-1) \right)
      \nonumber \\
  &= \overline x (k) 
        + \sum\limits_{j \in {\mathcal{M}_i}(k)} 
           a_{ij}(k) \hat{e}_j(k-1)
      \nonumber \\
  & \le \overline{x}(k) 
         + \max \limits_{j\in \mathcal{M}_i(k)} 
              \left| \hat{e}_j(k-1) \right|.
\end{flalign}
Thus, by \eqref{eq-**01} it follows
\begin{equation*}
  x_i(k + 1) 
    \le \overline x (k) + c_0
          +  c_1 \text{e}^{ - \alpha (k-1)}. 
\end{equation*}

Let $V(k) = \overline{x}(k) - \underline{x}(k)$. 
Then, introduce two sequences by
\begin{align}
  \overline{x}_0(k+1)
    = \overline{x}_0(k) + c_0 + c_1\text{e}^{-\alpha (k-1)},
       \label{eq-*1}\\
   \underline{x}_0(k+1)
    =\underline{x}_0(k) - c_0 - c_1\text{e}^{-\alpha (k-1)},
       \label{eq-*2}
\end{align}
where $\overline{x}_0(0) = \overline{x}(0) - \sigma _0$, 
and $\underline{x}_0(0) = \underline{x}(0) + \sigma _0$ 
with $\sigma _0 = \sigma V(0)$.
We next introduce another sequence $\varepsilon _0(k)$ defined by
\begin{align}\label{eq-*3}
  \varepsilon _0(k + 1) 
    = \gamma \varepsilon_0(k) - (1 - \gamma )\sigma_0,
\end{align}
where $\varepsilon _0(0) = \varepsilon V(0)$. 
Take the parameters $\varepsilon$ and $\sigma$ so that
\begin{equation}\label{eq-I01}
   \varepsilon + \sigma = \frac{1}{2},~~ 
    0 < \sigma  < \frac{\gamma ^N}{1 - \gamma^N}\varepsilon.
\end{equation}
For the sequence $\varepsilon_0(k)$, let
\begin{align*}
  \overline{\mathcal{X}_0}(k, \varepsilon _0(k)) 
   &= \left\{ 
        j \in \mathcal{V}:\ 
          x_j(k) > \overline{x}_0(k) - \varepsilon _0(k)
     \right\}, \\
  \underline{\mathcal{X}_0}(k, \varepsilon _0(k)) 
   &= \left\{ 
        j \in \mathcal{V}:\
         x_j(k) < \underline{x}_0(k) + \varepsilon _0(k)
      \right\}. 
\end{align*}
These two sets are both nonempty at time $k=0$ and, in particular, 
each contains at least one regular node; this is because 
by definition, $\overline{x}(0) > \overline{x}_0(0) - \varepsilon_0(0)$ 
and $\underline{x}(0) < \underline{x}_0(0) + \varepsilon_0(0)$.

In the following, we show that 
$\overline{ \mathcal{X}_0}(k,\varepsilon _0(k))$ and 
$\underline{ \mathcal{X}_0}(k,\varepsilon _0(k))$ are 
disjoint sets. To this end, we must show
\[
   \overline{x}_0(k) - \varepsilon_0(k) 
     \ge \underline{x}_0(k) + \varepsilon_0(k).
\]
By \eqref{eq-*1} and \eqref{eq-*2} for
$\overline{x}_0(k)$ and $\underline{x}_0(k)$, we have
\begin{align}\label{eq-ex01}
  &(\overline{x}_0(k) - \varepsilon_0(k)) 
     - (\underline{x}_0(k) + \varepsilon _0(k)) 
   \nonumber \\
  &  {\kern 10pt} 
     = \left( 
       \overline{x}_0(0) + c_0 k 
         + c_1 \frac{1 - \text{e}^{-\alpha (k-1)}}{%
                       1 - \text{e}^{-\alpha}}
       \right) \nonumber \\
  &  {\kern 25pt} 
      - \left( 
          \underline{x}_0(0) - c_0 k 
         - c_1 \frac{1 - \text{e}^{-\alpha (k-1)}}{%
                     1 - \text{e}^{-\alpha }}
         \right) - 2 \varepsilon_0(k).
\end{align}
Then by substituting 
$\overline{x}_0(0) = \overline{x}(0) - \sigma_0$ and 
$\underline{x}_0(0) = \underline{x}(0) + \sigma_0$ into 
the right-hand side of \eqref{eq-ex01}, we obtain
\begin{align}\label{eq-ex02}
  & \left(
       \overline{x}_0(k) - \varepsilon_0(k)
    \right) 
     - \left(
         \underline{x}_0(k) + \varepsilon_0(k)
       \right)  \nonumber  \\
  & = \left( 
         \overline{x}(0) - \underline{x}(0)
      \right) 
     - 2 \sigma_0 + 2 c_0 k 
         + 2 c_1 \frac{1 - \text{e}^{-\alpha (k-1)}}{%
                       1 - \text{e}^{-\alpha }} - 2 \varepsilon_0(k)
    \nonumber \\
  & = V(0) - 2\sigma V(0) + 2 c_0 k 
       + 2 c_1 \frac{1 - \text{e}^{-\alpha (k-1)}}{%
                     1 - \text{e}^{-\alpha }} - 2 \varepsilon_0(k).
\end{align}
By \eqref{eq-*3} and $0<\gamma\leq 1/2$, we easily have that 
$\varepsilon_0(k + 1) < \varepsilon_0(k)$, and hence
$\varepsilon_0(k)<\varepsilon_0(0)=\varepsilon V(0)$.
We thus obtain
\begin{align*}
 &(\overline{x}_0(k) - \varepsilon_0(k)) 
    - (\underline{x}_0(k) + \varepsilon_0(k)) \\
 &\hspace*{5mm}
   > (1 - 2\sigma  - 2\varepsilon )V(0) + 2 c_0 k 
     + 2 c_1 \frac{1 - \text{e}^{-\alpha (k-1)}}{%
                   1 - \text{e}^{-\alpha }} 
   > 0, 
\end{align*}
where the last inequality holds since 
$\sigma + \varepsilon = 1/2$ from \eqref{eq-I01}. 
Thus, 
$\overline{\mathcal{X}}_0(k,\varepsilon_0(k))$ and 
$\underline{\mathcal{X}}_0(k,\varepsilon_0(k))$ are disjoint sets.

From the above, we have that the two sets 
$\overline{\mathcal{X}_0}(0,\varepsilon_0(0))$ and 
$\underline{\mathcal{X}_0}(0,\varepsilon_0(0))$ are 
nonempty with at least one regular node in each and moreover 
disjoint. Therefore, by the assumption of $(F+1,F+1)$-robustness, 
there are three cases:
\begin{enumerate}
  \item All nodes in $\overline{\mathcal{X}_0}(0,\varepsilon_0(0))$ 
        have $F+1$ neighbors or more from outside.
  \item All nodes in $\underline{\mathcal{X}_0}(0,\varepsilon_0(0))$ 
        have $F+1$ neighbors or more from outside.
  \item The total number of nodes in $\overline{\mathcal{X}_0}(0,\varepsilon_0(0))$ 
        and $\underline{\mathcal{X}_0}(0,\varepsilon_0(0))$ having $F+1$ neighbors 
        or more from outside of its own set is no smaller than $F+1$.
\end{enumerate}
Notice that in any of the three cases, there exists at least one regular 
agent $i \in \mathcal{R}$ in either $\overline{\mathcal{X}_0}(0,\varepsilon_0(0))$ 
or $\underline{\mathcal{X}_0}(0,\varepsilon_0(0))$ that has $F+1$ neighbors 
or more from outside of its own set. 
Here, suppose that this node $i$ belongs to 
$\overline{\mathcal{X}_0}(0,\varepsilon_0(0))$. 
A similar argument holds for the case when it is 
in $\underline{\mathcal{X}_0}(0,\varepsilon_0(0))$.

Now, we go back to \eqref{eq-06} and rewrite it by partitioning 
the neighbor node set $\mathcal{M}_{i}(k)$ of node $i$ into two parts: 
The nodes which belong to 
$\overline{\mathcal{X}}_0(k,\varepsilon_0(k))$ and 
those that do not. Since node $i$ has at least $F+1$ neighbors
outside $\overline{\mathcal{X}}_0(k,\varepsilon_0(k))$, 
the latter set is nonempty. Hence, we obtain
\begin{align*}
 &x_i(k + 1) \\
 &~~= a_{ii}(k) x_i(k) 
     + \sum\limits_{j \in \mathcal{M}_i(k) 
       \cap {\overline {\mathcal{X}}}_0 } a_{ij}(k) x_j(k)\\
 &\hspace*{8mm} \mbox{}
    + \sum\limits_{j \in \mathcal{M}_i(k)\setminus \overline{\mathcal{X}}_0}
         a_{ij}(k) x_j(k) 
    + \sum\limits_{j \in \mathcal{M}_i(k)} 
         a_{ij}(k) \hat{e}_j(k-1),
\end{align*}
where we use the shorthand notation $\overline{\mathcal{X}}_0$ 
for $\overline{\mathcal{X}}_0(k,\varepsilon_0(k))$.
Then, we can bound this from above as
\begin{align}\label{eq-16}
  x_i(k + 1) 
  &\le a_{ii}(k) \overline x (k) 
        + \sum\limits_{j \in \mathcal{M}_i(k) \cap\overline{\mathcal{X}}_0} 
          a_{ij}(k) \overline{x}(k) 
    \nonumber \\
&\hspace*{8mm} \mbox{}
    + \sum\limits_{j \in {\mathcal{M}_i }(k)\setminus\overline{\mathcal{X}}_0} 
         a_{ij}(k)
      \left( 
         \overline{x}_0(k) - \varepsilon_0(k)
      \right) \nonumber\\
&\hspace*{15mm} \mbox{}
    + \sum\limits_{j \in \mathcal{M}_i(k)} a_{ij}(k)\hat{e}_j(k-1)
   \nonumber \\
&= \left( 
      1 - \sum\limits_{j \in \mathcal{M}_i(k)\setminus \overline{\mathcal{X}}_0}
       a_{ij}(k)
    \right)\overline x (k) 
  \nonumber \\
&\hspace*{8mm} \mbox{} 
    + \sum\limits_{j \in \mathcal{M}_i(k) \setminus \overline{\mathcal{X}}_0} 
       a_{ij}(k)
      \left(
          \overline{x}_0(k) 
           - \varepsilon_0(k) 
      \right)  \nonumber \\
&\hspace*{15mm} \mbox{} 
   + \sum\limits_{j \in \mathcal{\mathcal{M}}_i(k)} 
        a_{ij}(k) \hat{e}_j(k-1).
\end{align}

We next show that $\overline{x}(k) \le \overline{x}_0(k) + \sigma_0$ 
(and similarly, $\underline{x}(k) \ge \underline{x}_0(k) - \sigma_0$) 
by induction.
For $k=0$, by definition, we have $\overline{x} (0) = {\overline{x} _0}(0) + {\sigma _0}$.
Suppose that $\overline{x} (k) \le {\overline{x} _0}(k) + {\sigma _0}$.
Then, from \eqref{eq-08} and \eqref{eq-*1}, we have
\begin{align*}
  \overline{x} (k + 1)
   &\le \overline{x} (k) 
     + \mathop {\max }\limits_j 
      \left| {{\hat e_j}(k-1)} \right| 
    \le \overline{x} (k) + {c_0} + {c_1}{\text{e}^{ - \alpha (k-1)}} \\
  &\le {\overline{x} _0}(k) + {\sigma _0} + {c_0} 
     + {c_1}{\text{e}^{ - \alpha (k-1)}} 
  = {\overline{x} _0}(k + 1) + {\sigma _0}. 
\end{align*}

Then, \eqref{eq-16} can be further bounded as
\begin{align}\label{eq-18}
 &{x_i}(k + 1)
 \le \left( 
         {1 - \sum\limits_{j \in {\mathcal{M}_i }(k)\setminus {\overline {\mathcal{X}}}_0} {{a_{ij}}(k)} } 
       \right)
     \left( {{{\overline{x} }_0}(k) + {\sigma _0}} \right) 
       \nonumber \\
&\hspace*{18mm}\mbox{} 
   + \sum\limits_{j \in {\mathcal{M}_i }(k)\setminus {\overline {\mathcal{X}}}_0} {{a_{ij}}(k)} 
      \left( {{{\overline{x} }_0}(k) - {\varepsilon _0}(k)} \right) 
    \nonumber\\
&\hspace*{25mm}\mbox{} 
   + \sum\limits_{j \in {\mathcal{M}_i}(k)} {{a_{ij}}(k)}  
       {{\hat{e}_j}(k-1)} \nonumber  \\
&\le {\overline{x} _0}(k) 
     + \left( 
         {1 - \sum\limits_{j \in {\mathcal{M}_i }(k)\setminus {\overline {\mathcal{X}}}_0} {{a_{ij}}(k)} } 
       \right){\sigma _0}  
      \nonumber \\
&\hspace*{5mm}\mbox{}  
    -  \sum\limits_{j \in {\mathcal{M}_i }(k)\setminus {\overline {\mathcal{X}}}_0} {{a_{ij}}(k){\varepsilon _0}(k)  }
  +  \sum\limits_{j \in {\mathcal{M}_i}(k)} {{a_{ij}}(k)} 
        {\left| {{\hat{e}_j}(k-1)} \right| } .
\end{align}

We also show that $\varepsilon_{0}(k)>0$ holds for $k= 0,1,\ldots,N$. 
It is clear 
from \eqref{eq-*3} that
$\varepsilon_{0}(k+1)<\varepsilon_{0}(k)$. Thus 
we only need to guarantee $\varepsilon_{0}(N)>0$. 
By \eqref{eq-*3}, $\varepsilon_0(N)$ can be written as
\begin{align*}
 {\varepsilon _0}(N)
  &= {\gamma ^N}{\varepsilon _0}(0) 
     - \sum\limits_{i = 0}^{N - 1} {{\gamma ^i}(1 - \gamma )} {\sigma _0} 
     \nonumber \\
  &= {\gamma ^N}\varepsilon V(0) 
    - \frac{{1 - {\gamma ^N}}}{{1 - \gamma }}(1 - \gamma )\sigma V(0) 
       \nonumber \\
  &= \left( 
        {{\gamma ^N}\varepsilon  - (1 - {\gamma ^N})\sigma } 
     \right)V(0).
\end{align*}
This is positive because we have chosen $\sigma$ as in \eqref{eq-I01}.

Hence, \eqref{eq-18} can be written as
\begin{align}\label{eq-20}
 {x_i}(k + 1)
  & \le {\overline{x} _0}(k) 
     + \left( {1 - \gamma } \right){\sigma _0} 
    - \gamma {\varepsilon _0}(k) 
     + {c_0} + {c_1}{\text{e}^{ - \alpha (k-1)}} 
    \nonumber \\
  & = {\overline{x} _0}(k + 1) 
       - {\varepsilon _0}(k + 1),
\end{align}
where in the inequality, we used the fact that there always 
exists $j$ not in $\overline{\mathcal{X}}_0(k,\varepsilon _0(k))$. 
This relation shows that if an update happens at node $i$, then
this node will move out of $\overline{\mathcal{X}_0}(k + 1,\varepsilon_0(k + 1))$. 
We note that inequality \eqref{eq-20} also holds for the regular nodes 
that are not inside 
$\overline{\mathcal{X}}_0(k,\varepsilon_0(k))$
at time $k$.
This means that such nodes cannot move in 
$\overline{\mathcal{X}}_0(k + 1,\varepsilon_0(k + 1))$. 
It is also similar with $\underline{\mathcal{X}}_0(k + 1,\varepsilon_0(k+1))$.

Thus, after $N$ time steps, all regular nodes will be out of 
at least one of the two sets 
$\overline{\mathcal{X}}_0(N,\varepsilon_0(N))$ 
and $\underline{\mathcal{X}}_0(N,\varepsilon_0(N))$. 
We suppose that $\overline {\mathcal{X}}_0(N,\varepsilon_0(N)) \cap \mathcal{R}$ 
is empty. Then we have $\overline{x}(N) \le \overline{x}_0(N) - \varepsilon_0(N)$.
It hence follows that
\begin{align*}
  &V(N) 
    = \overline{x}(N) - \underline{x}(N) \\
  & \le \overline {x}_0(N) - \varepsilon _0(N) 
        - \underline {x}_0(N) + \sigma_0 \\
  & = \overline{x}_0(0) - \underline{x}_0(0) 
        + 2 c_0 N 
        + 2 \sum\limits_{i=0}^{N - 1} 
              c_1 \text{e}^{-\alpha i}
                - \varepsilon_0(N) + \sigma_0 \\
  & = \left( 
         \overline{x}(0) - \sigma_0
      \right) 
      - \left( 
          \underline{x}(0) + \sigma _0
        \right) + 2 c_0 N 
         + 2 c_1 \frac{1 - \text{e}^{-\alpha N}}{%
                       1 - \text{e}^{-\alpha}}\\
  & \hspace*{5cm}\mbox{}
     - \varepsilon_0(N) + \sigma_0  \\
  & = V(0) 
     + 2{c_0}N 
     + 2{c_1}\frac{1 - \text{e}^{-\alpha N}}{%
                   1 - \text{e}^{-\alpha }}
     -  \sigma V(0) \\
& \hspace*{4cm}\mbox{}
   - \left( 
       \gamma^N \varepsilon  - \left(1 - \gamma ^N\right)\sigma 
     \right)  V( 0 ) \\
& = \left( 
      1 - \gamma ^N (\varepsilon + \sigma)
    \right) V(0) 
     + 2{c_0}N 
     + 2{c_1}\frac{1 - \text{e}^{-\alpha N}}{%
                   1 - \text{e}^{-\alpha }}. 
\end{align*}
By $\varepsilon + \sigma = 1/2$ in \eqref{eq-I01}, we have
\begin{equation*}
  V(N) 
    \le \left( 
           1 - \frac{\gamma ^N}{2} 
         \right) V(0) 
          + 2{c_0}N 
         + 2{c_1}\frac{1 - \text{e}^{-\alpha N}}{%
                       1 - \text{e}^{-\alpha }}. 
\end{equation*}

If there are more updates by node $i$ after time $k=N$, this argument 
can be extended further as
\begin{align}\label{eq-24}
& V( l N) 
  \le \left( 1 - \frac{\gamma ^N}{2} \right) 
          V( (l - 1)  N) 
    \nonumber \\
 & {\kern 80pt} \mbox{}
   + 2{c_0}N 
   + 2{c_1}\frac{{1 - {\text{e}^{ - \alpha N}}}}{%
                 {1 - {\text{e}^{ - \alpha }}}} 
      {\text{e}^{ -(l - 1 )\alpha N}} 
      \nonumber \\
 &\le \left( 
          1 - \frac{\gamma ^N}{2} 
      \right)^l
           V(0)  
    + \sum\limits_{t = 0}^{l - 1} 
         \left( 
           1 - \frac{\gamma ^N}{2} 
         \right)^{l - 1 - t}
    \nonumber \\
 & {\kern 80pt} \mbox{}
   \times \left( 
            2 c_0 N + 2 c_1
          \frac{1 - \text{e}^{ - \alpha N}}{%
                1 - \text{e}^{ - \alpha }}
           \text{e}^{ - (t - 1)\alpha N}
          \right) 
    \nonumber   \\
 & \le \left( 
          1 - \frac{\gamma ^N}{2} 
       \right)^l  V(0) 
     + \frac{1 - \left( 
                   1 - \frac{\gamma ^N}{2}
                 \right)^l}{%
             1 - \left( 
                   1 - \frac{\gamma ^N}{2} 
                 \right)} 2{c_0}N 
    \nonumber \\
& {\kern 15pt} \mbox{}
   + 2{c_1}\frac{1 - \text{e}^{ - \alpha N}}{%
                 1 - \text{e}^{ - \alpha }}
     \left( 
        1 - \frac{\gamma ^N}{2} 
     \right)^l
     \frac{1 - (1 -\frac{\gamma ^N}{2})^{- l} \text{e}^{-\alpha Nl}}{%
           1 - \left( 
                  1 - \frac{\gamma ^N}{2}
               \right)^{ - l} \text{e}^{ - \alpha N}}.
\end{align}
From \eqref{eqn:thm1}, we can easily obtain
\begin{equation}
  \mathop {\limsup }\limits_{l \to \infty } V(lN) 
   \leq \frac{2 c_0 N}{1 - \left( 
                            1 - \frac{\gamma ^N}{2}
                          \right)}
    = \frac{4 c_0 N}{ \gamma^N}
    \leq c. 
\label{eqn:boundV}
\end{equation}

Now, we show the dynamics of $V(lN+t)$ for $t=0,1,\ldots,N-1$. 
The analysis is similar, and we can obtain an inequality 
like \eqref{eq-24}, where the only difference is 
that in the derivation, $V(0)$ is replaced with $V(t)$. 
From the safety condition, we know that $V(k) \leq |\mathcal{S}|$ 
for all $k$. Therefore, we finally arrive at
\[
 \mathop {\limsup }\limits_{l \to \infty } 
   V(l N+t) 
    \le \frac{4 c_0 N}{\gamma ^N}
    \leq c.         
\]
This completes the proof of the consensus condition.
\end{proof}

The above result shows that the multi-agent system 
is guaranteed to reach resilient consensus despite 
the presence of $F$-total malicious agents. 
First, the width of the safety interval $\mathcal{S}$ 
is determined by the initial states of the regular agents. 
Second, the error that may remain after achieving 
resilient consensus meets the specified bound $c$ by
selecting the key parameter in the triggering function
$c_0$, proportionally to $c$. 
This parameter can be set by the designer and, 
clearly, by taking $c_0=0$, exact consensus can 
be achieved at the expense of having more communications. 
The role of $c_1$ and $\alpha$ is to reduce the 
communication during the transient stage by making the 
threshold in the triggering function large. 
We note that the exponential decaying bound 
by $c_1$ and $\alpha$ can also decrease the 
communication in the long run. 
We will see the effects of the parameters of the 
event-triggering function through a numerical example 
in Section~\ref{S:5}. 

In the literature of event-based consensus, 
conventional schemes often employ triggering functions
whose thresholds go to zero over time, in both continuous- 
and discrete-time domains
(e.g., \cite{Dimarogonas2012,Guo2014,Ma2017,MengChen2013,Meng2017}). 
By contrast, \cite{Kia2015,Seyboth2013}
use thresholds which always take positive values as in our study. 
In comparison,
our upper bound for the consensus error is more conservative. 
Because of the malicious agents, the analysis cannot 
apply the methods in previous works and must follow
those in resilient consensus 
problems such as \cite{Dibaji2017}; as a consequence,
the bound on consensus errors grows exponentially
with $N$ (see \eqref{eqn:boundV}). In the
conventional results of \cite{Kia2015,Seyboth2013},
the bounds depend on $N$ linearly as well as on the 
Laplacian matrix.

A related result for the case of $F$-local model 
for the adversarial nodes can be found in 
\cite{Kikuya2018} with a particular application to clock synchronization 
in WSNs. It studies a resilient consensus problem with decaying noise 
that arises in the system due to the interactions among clock states.


\begin{myremark}\label{remark1}\rm
We should highlight that in the discrete-time domain,
event-based consensus algorithms must be carefully designed
especially in the resilient case. 
We can construct another resilient consensus algorithm 
motivated by the structures found in \cite{Seyboth2013,Xiao2008},
which deal with continuous-time multi-agent systems, as 
\begin{equation} \label{eq-**03}
 {x_i}(k+1)
   = {x_i}(k)
      + \sum\limits_{j \in {\mathcal{M}_i}(k)} 
         a_{ij}(k) 
         \left( 
            \hat{x}_j(k) - \hat{x}_i(k)
         \right).
\end{equation}
The modification may be minor as the only difference
is that $\hat{x}_i(k)$ is used instead of $x_i(k)$ in 
the second term of the right-hand side.
Compared with Protocol~1, to guarantee the consensus 
error level of $c$, the choice of $c_0$ must be
half as $c_0\leq \gamma^N/8N$, which may increase
the communication load. These results can be obtained 
by following a proof similar to that of Theorem~\ref{theorem1}.

\end{myremark}

In the next section, we present yet another protocol by further
changing the terms in the update rule. 

\section{Protocol~2}
\label{S:4}

In this section, we provide our second resilient consensus algorithm,
referred to as Protocol~2. 

To this end, we modify the update rule \eqref{eq-03}
in a way different from the protocol \eqref{eq-**03} discussed 
in Remark~\ref{remark1}.
It is pointed out that in Protocol~1, 
for obtaining the new state $x_i(k+1)$ of agent~$i$, 
its own data appears only through the current state $x_i(k)$.
On the one hand, this means that even when the new state is not communicated, 
it still needs to be stored at every time step. 
On the other, as the current state $x_i(k)$ is newer than $\hat{x}_i(k)$, 
it seems desirable for speeding up the convergence.  
We will however show that it may be better to use only $\hat{x}_i(k)$
for both storage and convergence reasons.
The protocol introduced below is motivated 
by those in \cite{Kadowaki2015,Xiao2008}.

In the local update, for $k\in \mathbb{Z_+}$, every regular node $i\in \mathcal{R}$ updates its current state by
\begin{equation}\label{eq-27}
  x_i(k+1)
    = \hat x_i(k) 
       + \sum\limits_{j \in {\mathcal{M}_i}(k)}
           a_{ij}(k) 
       \left( 
           \hat{x}_j(k) - \hat{x}_i(k)
       \right).
\end{equation}
Note that the new state $x_i(k+1)$ need not be stored until 
the next time step,
but is merely used for checking the condition of the 
triggering function $f_i(k)$ in \eqref{eq-02}. 
Accordingly, in the E-MSR, steps~1 and~2 should be adjusted so that
agent~$i$ uses $\hat{x}_i(k)$ instead of $x_i(k)$ in determining
the neighbor set $\mathcal{M}_i(k)$.

Then we are ready to present our second main result of this paper, 
which is regarding Protocol~2.

\begin{theorem} \label{theorem2}\rm
Under the $F$-total malicious model, the normal agents with E-MSR using \eqref{eq-27} and \eqref{eq-04} reach resilient consensus if and only if the underlying graph is $(F+1,F+1)$-robust. The safety interval is given by 
$\mathcal{S} = \left[ {\underline {\hat{x}} (0),\overline {\hat{x}} (0)} \right]$, 
and the consensus error level $c$ is achieved if
the parameter $c_0$ in the triggering function \eqref{eq-02}
satisfies
\begin{equation}\label{th2}
  c_0 \leq \frac{\gamma^{N - 1}( 1 - \gamma )c}{%
                     1 - \gamma^{N-1}}.
\end{equation}
\end{theorem}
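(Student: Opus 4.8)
The plan is to mirror the proof of Theorem~\ref{theorem1}, exploiting the fact that under \eqref{eq-27} the updated state is a genuine convex combination of auxiliary values. The necessity part is identical to that of Theorem~\ref{theorem1} and follows from \cite{LeBlanc2013} with $c_0=c_1=0$. For sufficiency, I would first rewrite \eqref{eq-27} as $x_i(k+1)=a_{ii}(k)\hat{x}_i(k)+\sum_{j\in\mathcal{M}_i(k)}a_{ij}(k)\hat{x}_j(k)$ with $a_{ii}(k)=1-\sum_{j\in\mathcal{M}_i(k)}a_{ij}(k)$, i.e.\ a convex combination of the auxiliary states of node~$i$ and of its retained neighbors. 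Since step~2 of the E-MSR discards at most $F$ values on each side and there are at most $F$ malicious nodes, every retained $\hat{x}_j(k)$ lies in $[\underline{\hat{x}}(k),\overline{\hat{x}}(k)]$; hence $x_i(k+1)\in[\underline{\hat{x}}(k),\overline{\hat{x}}(k)]$, and by \eqref{eq-04} also $\hat{x}_i(k+1)\in[\underline{\hat{x}}(k),\overline{\hat{x}}(k)]$. An induction then yields the safety interval $\mathcal{S}=[\underline{\hat{x}}(0),\overline{\hat{x}}(0)]$ and, at the same time, shows that $\overline{\hat{x}}(k)$ is nonincreasing and $\underline{\hat{x}}(k)$ nondecreasing, so that $\hat{V}(k):=\overline{\hat{x}}(k)-\underline{\hat{x}}(k)$ is already monotone nonincreasing before robustness is invoked.

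For the consensus condition I would substitute $\hat{x}_j(k)=x_j(k)+\hat{e}_j(k-1)$, with $|\hat{e}_j(k-1)|\le c_0+c_1\text{e}^{-\alpha(k-1)}$ exactly as in \eqref{eq-**01}, and reuse the machinery of Theorem~\ref{theorem1}: introduce comparison sequences $\overline{x}_0(k),\underline{x}_0(k),\varepsilon_0(k)$ and the sets $\overline{\mathcal{X}}_0(k,\varepsilon_0(k)),\underline{\mathcal{X}}_0(k,\varepsilon_0(k))$, now built from the auxiliary states since the E-MSR of Protocol~2 ranks neighbors by $\hat{x}_i(k)$. The disjointness of the two sets and the $(F+1,F+1)$-robustness are applied as before, guaranteeing at each step a regular node in one extreme set with at least one retained neighbor strictly inside. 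The essential difference is that the contraction now acts directly on the held auxiliary values: a node in the upper set with an interior neighbor produces $x_i(k+1)$ that is pulled below the current upper edge by at least a factor $\gamma$ times the current gap, whereas in Protocol~1 the stale own-state $x_i(k)$ kept feeding back the full maximum.

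The main obstacle, and the point where Protocol~2 departs quantitatively from Theorem~\ref{theorem1}, is the bookkeeping of the triggering error in the non-triggering case. When $f_i(k)\le 0$ the auxiliary value is frozen at $\hat{x}_i(k)$ even though $x_i(k+1)$ has moved inward; I must show that the frozen value cannot stay near the extreme by more than a threshold, i.e.\ $\hat{x}_i(k)\le x_i(k+1)+c_0+c_1\text{e}^{-\alpha k}$, so that the residual gap contributed per step is at most one threshold rather than being re-amplified. Because the dynamics are purely in terms of auxiliary states, this threshold does \emph{not} enter additively at every time step, as the $2c_0N$ term did in \eqref{eq-24}; instead it propagates multiplicatively through the $N-1$ steps needed for the robustness argument to empty one extreme set, producing a geometric accumulation $c_0\sum_{i=0}^{N-2}\gamma^{i}=c_0(1-\gamma^{N-1})/(1-\gamma)$ against a per-round contraction of order $\gamma^{N-1}$. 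Iterating a recursion of the form $\hat{V}(\text{next})\le(1-\gamma^{N-1})\hat{V}(\text{current})+c_0(1-\gamma^{N-1})/(1-\gamma)$ and passing to the limit yields $\limsup_{k\to\infty}\hat{V}(k)\le c_0(1-\gamma^{N-1})/(\gamma^{N-1}(1-\gamma))$; since $V(k+1)\le\hat{V}(k)$ the consensus error obeys the same bound, and requiring it to be at most $c$ gives precisely \eqref{th2}. I expect the delicate part to be verifying the per-round contraction factor and confirming that $N-1$ rather than $N$ steps suffice, which hinges on the convex-combination structure keeping the freshly updated states strictly interior.
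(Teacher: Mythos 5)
Your overall intuition --- that under \eqref{eq-27} the triggering threshold accumulates geometrically through the $\gamma$-contractions rather than additively as in the $2c_0N$ term of \eqref{eq-24} --- is the right one, and your safety argument matches the paper's. But the consensus part has a genuine gap: the recursion $\hat{V}(\text{next})\le(1-\gamma^{N-1})\hat{V}(\text{current})+c_0(1-\gamma^{N-1})/(1-\gamma)$ is asserted, not derived, and the Theorem~\ref{theorem1} machinery you propose to reuse cannot produce it. In that machinery the two comparison sets $\overline{\mathcal{X}}_0$ and $\underline{\mathcal{X}}_0$ must be kept nonempty and disjoint, which forces $\varepsilon+\sigma\le 1/2$, i.e.\ each set can reach only half-way into the range $\hat{V}$; and the robustness argument guarantees that only \emph{one} of the two sets empties after $N$ steps. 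The per-round contraction is therefore at best of the form $1-\gamma^{N-1}/2$, whose fixed point gives $c_0\le\gamma^{N-1}(1-\gamma)c/\bigl(2(1-\gamma^{N-1})\bigr)$ --- a factor of two short of \eqref{th2}. The paper itself warns of exactly this: the ``conventional technique could be employed here, but this will result in the same bound on $c_0$ as in Theorem~\ref{theorem1},'' and the bound \eqref{th2} is shown to be tight on the worst-case graph, so no argument with a spare factor of two can be patched up to prove the stated theorem. A secondary issue is that your opening substitution $\hat{x}_j(k)=x_j(k)+\hat{e}_j(k-1)$ reintroduces the true states and hence the additive per-step error you later claim to avoid; for Protocol~2 the error should enter only through the freeze condition $|\hat{x}_i(k)-x_i(k+1)|\le c_0+c_1\mathrm{e}^{-\alpha k}$.

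The paper's actual proof abandons the $\varepsilon_0$-comparison sets entirely. It sorts the regular auxiliary values $\hat{x}_{s_1}(k)\le\cdots\le\hat{x}_{s_N}(k)$ and imposes two chains of conditions on \emph{consecutive gaps}, $A_j:\ \hat{x}_{s_{j+1}}-\hat{x}_{s_j}\le(c_0+c_1\mathrm{e}^{-\alpha k})/\gamma^{j}$ from below and the mirror-image $B_j$ from above. If some condition fails, $(F+1,F+1)$-robustness plus the convex-combination structure of \eqref{eq-27} forces, within $N$ steps, either $\underline{\hat{x}}$ to rise or $\overline{\hat{x}}$ to fall by more than $c_0+c_1\mathrm{e}^{-\alpha k}$; since the safety argument already shows $\overline{\hat{x}}(k)$ is nonincreasing and $\underline{\hat{x}}(k)$ nondecreasing, such violations can occur only finitely often. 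Hence eventually all chain conditions hold, and \emph{telescoping} the $N-1$ gaps gives the spread bound $c_0\sum_{j=1}^{N-1}\gamma^{-j}=c_0(1-\gamma^{N-1})/(\gamma^{N-1}(1-\gamma))\le c$, which is \eqref{th2} with no factor of two. If you want to complete your proof, you should switch to this gap-chain/monotonicity argument rather than a range-contraction recursion.
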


\begin{proof}
The necessity part follows similar lines as those in the proof of 
Theorem~\ref{theorem1}. In the following, we thus give the sufficiency part. 

First, we establish the safety condition in the sense of
$x_i(k), \hat{x}_i(k)\in\mathcal{S}$ for regular nodes $i$.
This is done by induction.
At $k=0$, for each $i\in\mathcal{R}$, it holds
$x_i(0), \hat{x}_i(0)\in\mathcal{S}$ by definition.
Next, assume that at time $k$, we have 
$x_i(k), \hat{x}_i(k)\in \mathcal{S}$ for $i\in\mathcal{R}$.
Then, for agent~$i$, its neighbors $j\in\mathcal{M}_i(k) $ 
satisfy $\hat{x}_j(k)\in\mathcal{S}$ since there are at most $F$
agents with values outside $\mathcal{S}$, and they are
ignored in step~2 of the E-MSR. 
From the update rule \eqref{eq-27}, we have
\begin{align} \label{eq-l01}
 x_i(k+1)
  & = a_{ii}(k) \hat{x}_i(k) 
      + \sum\limits_{j \in \mathcal{M}_i(k)} a_{ij}(k) \hat{x}_j(k) 
      \nonumber \\
  & \le  a_{ii}(k) \overline{\hat{x}}(k) 
       + \sum\limits_{j \in \mathcal{M}_i(k)}
        a_{ij}(k) \overline{\hat{x}} (k)
   = \overline{\hat{x}}(k),
\end{align}
where $a_{ii}(k) 
   = 1 - \sum_{j \in {\mathcal{M}_i}(k)} 
           a_{ij}(k)$.
The inequality \eqref{eq-l01} means that the upper 
bound of every regular node is nonincreasing. 
Similarly, we have 
$x_i(k+1)\geq \underline {\hat x}(k)$, 
so we obtain 
$x_i(k)\in \mathcal{S}$
for $k \ge 0$.
Furthermore, by \eqref{eq-04}, it holds that
$\hat{x}_i(k + 1) \in \mathcal{S}$. 
Hence, we have $\mathcal{S}$ as the safety interval.

For the consensus condition part, we first sort the regular 
communicated values ${\hat x}_i(k), i \in \mathcal{R}$, 
at time $k$ in the entire graph from the smallest to the 
largest. Denote by $s_i(k)$ 
the index of the agent taking the $i$th value 
from the smallest.
Hence, the values are sorted as 
$\hat{x}_{s_1}(k) \le \hat{x}_{s_2}(k) \le \cdots \le \hat{x}_{s_{N}}(k)$.

Introduce two sequences of conditions for the relation
of each gap between two nodes. The first is given from below as
\begin{itemize}
  \item $A_1$:  $\hat{x}_{s_2}(k) - \hat{x}_{s_1}(k) 
                    \le (c_0+c_1\text{e}^{-\alpha k})/{\gamma}$,
  \item $A_2$:  $\hat{x}_{s_3}(k) - \hat{x}_{s_2}(k) 
                    \le (c_0+c_1 \text{e}^{-\alpha k})/{\gamma^{2}}$,
  \item $\cdots$
  \item $A_{N-1}$:  $\hat{x}_{s_N}(k) - \hat{x}_{s_{N-1}}(k) 
                    \le (c_0+c_1\text{e}^{-\alpha k})/{\gamma^{N-1}}$.
\end{itemize}
The other sequence is from above as
\begin{itemize}
  \item $B_{N}$:  $\hat{x}_{s_N}(k) - \hat{x}_{s_{N-1}}(k) 
                   \le (c_0+c_1\text{e}^{-\alpha k})/{\gamma}$,
  \item $B_{N-1}$:  $\hat{x}_{s_{N-1}}(k) - \hat{x}_{s_{N-2}}(k) 
                   \le (c_0+c_1\text{e}^{-\alpha k})/{\gamma^{2}}$,
  \item $\cdots$
  \item $B_2$:  $\hat{x}_{s_2}(k) - \hat{x}_{s_1}(k) 
                   \le (c_0 + c_1\text{e}^{-\alpha k})/{\gamma^{N-1}}$.
\end{itemize}
Let $j_A$ be the minimum $j=1,\ldots,N-1$ such that 
condition $A_{j}$ is not satisfied. Also, let $j_B$ 
be the maximum $j=2,\ldots,N$ such that condition $B_{j}$ 
is not satisfied. Thus we have
\begin{equation} 
\begin{split}
  & \hat{x}_{s_{j_A+1}}(k) - \hat{x}_{s_{j_A}}(k) 
       > \frac {c_0 + c_1 \text{e}^{-\alpha k} }{\gamma^{j_A}}, \\
  & \hat{x}_{s_{j_B}}(k) - \hat{x}_{s_{j_B-1}}(k) 
       > \frac {c_0 + c_1 \text{e}^{-\alpha k} }{\gamma^{N-j_B+1}}.
\end{split}
\label{eq-**04}
\end{equation}
Moreover, conditions $A_1$ to $A_{j_A-1}$ and $B_{j_B+1}$ 
to $B_{N}$ are satisfied. Then,
for $0 \le k \le k'$, we introduce two sets
\begin{align*}
  \mathcal{X}_{1}(k, k') 
    &= \left\{ 
        j \in \mathcal{V}:~
         \hat{x}_j(k') 
           < \hat{x}_{s_{j_A}}(k) 
               + c_0 + c_1\text{e}^{-\alpha k} 
        \right\},\\
  \mathcal{X}_{2}(k, k') 
    &= \left\{ 
           j \in \mathcal{V}:~
             \hat{x}_j(k') > \hat{x}_{s_{j_B}}(k) 
               - c_0 - c_1\text{e}^{-\alpha k} 
         \right\}.
\end{align*}

There are two cases concerning the relationship between 
$j_A$ and $j_B$. We study them separately below.

Case 1: $j_A<j_B$. Let the two subsets of the regular nodes be
$\mathcal{V}_1=\left\{s_1(k), s_2(k),\ldots, s_{j_A}(k)\right\}$
and $\mathcal{V}_2=\left\{s_{j_B}(k),\ldots,s_{N}(k)\right\}$. 
Note that all nodes in $\mathcal{V}_1$ are inside 
$\mathcal{X}_{1}(k, k)$, and those in $\mathcal{V}_2$ 
are inside $\mathcal{X}_{2}(k, k)$. 
Hence, $\mathcal{X}_{1}(k, k)$ and $\mathcal{X}_{2}(k, k)$ are nonempty.
They are moreover disjoint. This is because by using the
two inequalities in \eqref{eq-**04},
from $1\leq j_A < j_B\leq N$ and $0<\gamma\leq 1/2$, 
it follows that 
\begin{align*}
\hat{x}_{s_{j_B}}(k) - \hat{x}_{s_{j_A}}(k) 
  &> \max \left\{
            \frac{1}{\gamma^{j_A}}, \frac{1}{\gamma^{N-j_B+1}}
         \right\}
          \left(
             c_0 + c_1 \text{e}^{-\alpha k} 
          \right)\\
  &\geq 2 \left(
             c_0 + c_1 \text{e}^{-\alpha k} 
          \right).
\end{align*}
Thus, the $(F+1,F+1)$-robust graph guarantees that some
regular node $i$ in 
$\mathcal{X}_{1}(k, k)$ or $\mathcal{X}_{2}(k, k)$ 
has at least $F+1$ neighbors outside. 
We suppose that $i \in \mathcal{X}_1(k,k)$. 
By \eqref{eq-27}, 
\begin{align*}
 {x_i}(k + 1)
   &= a_{ii}(k) \hat{x}_i(k)
       + \sum\limits_{j \in {\mathcal{M}_i}(k) \cap \mathcal{X}_1} 
          a_{ij}(k)\hat{x}_j(k)\\
   &\hspace*{3cm}\mbox{}
       + \sum\limits_{j \in {\mathcal{M}_i}(k) \setminus \mathcal{X}_1} 
          a_{ij}(k) \hat{x}_j(k),
\end{align*}
where the simplified notation $\mathcal{X}_1$ is used for $\mathcal{X}_1(k,k)$.
Since  ${\mathcal{M}_i}(k) \setminus \mathcal{X}_1(k,k)$ is 
not empty, we have
\begin{align} \label{eq-l05}
  x_i(k + 1)
   & \ge ( 1 - \gamma ) \hat{x}_{s_1}(k) 
      + \gamma \hat{x}_{s_{j_A+1}}(k).
\end{align}
Using conditions $A_1$ to $A_{j_A-1}$, 
we can bound $\hat{x}_{s_1}(k)$ 
as
\begin{align*} 
 &\hat{x}_{s_1}(k)
   \ge \hat{x}_{s_2}(k) 
         - \frac{c_0 + c_1\text{e}^{-\alpha k}}{\gamma} \\
  &~~ \ge \hat{x}_{s_3}(k) 
         - \left( 
             \frac{1}{\gamma} 
                + \frac{1}{\gamma^{2}} 
            \right)
          \left( c_0 + c_1\text{e}^{-\alpha k} \right) 
      \ge \cdots  \\
  &~~ \ge \hat{x}_{s_{j_A}}(k) 
         - \left( 
             \frac{1}{\gamma} 
                + \frac{1}{\gamma^{2}} 
                + \cdots + \frac{1}{\gamma^{j_A-1}} 
           \right)
             \left( 
               c_0 + c_1\text{e}^{-\alpha k} 
             \right).
\end{align*}
Substitute this into \eqref{eq-l05} and obtain
\begin{align} \label{eq-l07}
  x_i(k + 1)
   & \ge \hat{x}_{s_{j_A}}(k) 
       + \gamma \left( 
                 \hat{x}_{s_{j_A + 1}}(k) - \hat{x}_{s_{j_A}}(k) 
                \right) \nonumber \\
  &  {\kern 15pt} \mbox{} 
      - \frac{1}{\gamma ^{j_A - 1}}{(c_0+c_1\text{e}^{-\alpha k})} 
      + (c_0 + c_1\text{e}^{-\alpha k})
         \nonumber \\
  & > \hat{x}_{s_{j_A}}(k) 
        + \gamma \frac{c_0+c_1\text{e}^{-\alpha k}}{\gamma ^{j_A}} 
        \nonumber \\
  & {\kern 15pt} \mbox{} 
     - \frac{1}{\gamma ^{j_A - 1}}{(c_0+c_1\text{e}^{-\alpha k})} 
     + {(c_0+c_1\text{e}^{-\alpha k})} 
        \nonumber \\
  & = \hat{x}_{s_{j_A}}(k) + {(c_0+c_1\text{e}^{-\alpha k})},
\end{align}
where the second inequality follows by \eqref{eq-**04}. 
Thus, this node $i$ is moved out of set 
$\mathcal{X}_{1}(k, k+1)$ 
at time $k+1$.

We next show that the regular nodes not in $\mathcal{X}_1(k,k)$ at time $k$
will not move in $\mathcal{X}_{1}(k, k+1)$ at time $k+1$. 
If node $j$ has some neighbors inside $\mathcal{X}_1(k,k)$, 
then \eqref{eq-l05} and \eqref{eq-l07} hold and 
we know that the node does not move in $\mathcal{X}_{1}(k, k+1)$. 
If node $j$ has neighbors only in 
$\mathcal{V} \setminus \mathcal{X}_1(k,k)$, then we have
\[
   x_j(k+1) 
     \ge \hat{x}_{s_{j_A+1}}(k) 
     > \hat{x}_{s_{j_A}}(k) 
        + \frac{c_0 + c_1 \text{e}^{-\alpha k}}{\gamma^{j_A}}.
\]
Clearly, node $j$ does not move in $\mathcal{X}_{1}(k, k+1)$ 
in this case. 

Therefore, the regular nodes in 
$\mathcal{X}_{1}(k,k+1)$ decrease in number as
$\mathcal{X}_{1}(k, k+1) \cap \mathcal{R} 
\subsetneq {\underline { \mathcal{X}}_{1}}(k, k) \cap \mathcal{R}$.
Similar results also hold if $i \in \mathcal{X}_2(k,k)$, and we have 
${\hat x}_i(k+1)$ decreases more than $c_0+c_1\text{e}^{-\alpha k}$ 
compared with $\hat{x}_{s_{j_B}}(k)$.

As a result, if conditions $A_{j_A}$ and $B_{j_B}$ with $j_A< j_B$ are 
not satisfied, after $N$ steps, the set 
$\mathcal{X}_{1}(k, k+N)$ or 
$\mathcal{X}_{2}(k, k+N)$ becomes empty 
in regular nodes. It then follows that $\underline {\hat x}(k+N)$ 
increases more than $c_0+c_1\text{e}^{-\alpha k}$ or 
$\overline {\hat x}(k+N)$ decreases more than 
$c_0+c_1\text{e}^{-\alpha k}$.

A special case in Case 1 is when $j_A=j_B-1$. 
It corresponds to having
only one pair of nodes whose difference 
in values does not satisfy the condition. 
By applying a similar analysis, we have that
$\underline {\hat x}(k+N)$ increases more than 
$c_0+c_1\text{e}^{-\alpha k}$ or $\overline {\hat x}(k+N)$ 
decreases more than $c_0+c_1\text{e}^{-\alpha N}$.

Case 2: $j_A \ge j_B$. This case is impossible. We can 
show this by contradiction as follows. Since $j_A \ge j_B$, 
we know that $A_{j_B-1}$ and $B_{j_A+1}$ are both satisfied. 
Combined with $A_{j_A}$ and $B_{j_B}$ not being satisfied, 
we have
\begin{align}
 \frac {{c_0+c_1\text{e}^{-\alpha k}}}{\gamma^{N-j_B+1}}  
   &< \hat{x}_{s_{j_B}}(k) - \hat{x}_{s_{j_B-1}}(k)
   \le \frac {{c_0+c_1\text{e}^{-\alpha k}}}{\gamma^{j_B-1}},
   \label{ssA1}\\
 \frac{{c_0+c_1\text{e}^{-\alpha k}}}{\gamma^{j_A}} 
   &< \hat{x}_{s_{j_A+1}}(k) - \hat{x}_{s_{j_A}}(k)
   \le \frac {{c_0+c_1\text{e}^{-\alpha k}}}{\gamma^{N-j_A}}.
   \label{ssA2}
\end{align}
The inequalities in the first relations in \eqref{ssA1} 
indicate that it must hold 
$j_B > (N+1)/2$. The second set of inequalities in \eqref{ssA2} 
also implies $j_A < N/2$. Consequently, we have $j_A < j_B$, which is in 
contradiction with $j_A \ge j_B$.

We can now conclude that after a finite number of time steps, 
all conditions from $A_1$ to $A_m$ and $B_{m+2}$ to $B_{N}$, 
where $0 \le m \le N-1$, must be satisfied. 
Otherwise the difference between $\overline {\hat x}(k)$ 
and $\underline {\hat x}(k)$ will decrease more than $c_0$ by an update induced by an event. From the analysis for the safety condition, 
we know that $\overline {\hat x}(k)$ is nonincreasing 
and $\underline {\hat x}(k)$ is nondecreasing. 
Hence, if the events continuously occur, 
$\overline {\hat x}(k) - \underline {\hat x}(k)$ will 
become smaller and eventually negative, which cannot happen. 
This completes the proof.
\end{proof}

Protocol~2 enables us to achieve resilient consensus 
with data communicated via event-based protocols. 
We emphasize that our analysis for Protocol~2 is less 
conservative compared to Theorem~\ref{theorem1} for Protocol~1.
In fact, we can see by directly comparing 
the two theorems that the bound on the parameter $c_0$ is larger
for achieving the same level $c$ of consensus error;
this may result in less frequent transmissions. 
We will confirm this property later in Section~\ref{S:5} 
through numerical simulations.

A unique aspect of Protocol~2 is that
the proof technique used in Theorem~\ref{theorem2} 
is different 
from those used in the recent works such as 
\cite{Dibaji2017,Dibaji2018,LeBlanc2017,LeBlanc2013}
and also in the proof of Theorem~\ref{theorem1}. 
The conventional technique could be employed here, but this will 
result in the same bound on $c_0$ as in Theorem~\ref{theorem1}.
In fact, as we see below, the bound obtained in Theorem~\ref{theorem2} 
is tight for some graphs.

\begin{myremark}\rm
We present an example of a multi-agent system whose error 
in consensus among the agents is equal to the bound obtained 
in Theorem~\ref{theorem2}. Such a graph may be called 
a worst-case graph. Consider the network in Fig.~\ref{fig1} 
with four nodes which are all regular and thus $F=0$. 
Note that the graph contains a directed spanning tree.
The initial values $x_i(0)$ of the nodes and the (constant) 
weights $a_{ij}(k)$ on the edges are indicated in the figure. 
Since the weights are all $1/2$ (and thus $\gamma=1/2$), 
for nodes having two neighbors, their own values are not 
used in the update rule \eqref{eq-27}. 
Moreover, for the node in the far left, a self-loop is 
shown to indicate that this node uses its own value. 
The node in the far right has no incoming edge, and 
thus its value will not change over time. 

By setting the parameters for the triggering function 
as $c_0=1$ and $c_1=0$, it follows that there will be 
no event at any time. The difference in their 
values is 14, which can be obtained as $c$ by equating
the inequality~\eqref{th2} in Theorem~\ref{theorem2}. 
In comparison, for Protocol~1, the bound $c$ on the difference
will be 256 by Theorem~\ref{theorem1}; this is much larger,
indicating the conservatism of the analysis approach there.
Note that the graph structure in Fig.~\ref{fig1} is obtained based on
the proof of Theorem~\ref{theorem2}. It is a bit special in the sense
that not all agents have self-loops. To comply with the theory, 
we can extend this example by adding self-loops; it will not be
a worst-case graph any longer, but the difference among the
values will be larger than other graphs. 
\end{myremark}

\begin{figure}[t]
\centering
\includegraphics[width=0.62\linewidth]{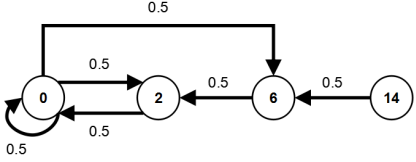}
\vspace*{-2mm}
\caption{Worst-case graph with $N$=4}
\label{fig1}
\vspace*{-4mm}
\end{figure}

\section{Numerical Example}
\label{S:5}

In this section, we illustrate the proposed 
resilient protocols via numerical 
simulations. We first examine a small-scale 
network and then focus on the scalability for 
larger systems. 

\vspace*{-2mm}
\subsection{Small network}

We consider the multi-agent system with seven 
nodes whose connectivity graph is shown in 
Fig.~\ref{fig2}; as already mentioned, 
this graph is $(3,3)$-robust. 
We compare the performance of 
Protocols~1 and~2
using different parameters in event-triggering. 
In particular, we test the two cases of $c_0>0$ 
and $c_0=0$.
Here, nodes~5 and~7 are set to behave maliciously
by continuously oscillating their values; 
in all simulations, we used the same state values for them.
The initial state was chosen the same for each run as well at 
$x(0)=[1 {\kern 5pt} 2 {\kern 5pt} 3 {\kern 5pt} 
5 {\kern 5pt} 4 {\kern 5pt} 6 {\kern 5pt} 4]^{T}$.
We also took $\gamma=0.3$. 

First, we examine the case of $c_0>0$. 
We fixed the consensus error bound as $c = 1$. 
For Protocol~1, based on Theorem~\ref{theorem1}, 
we chose $c_0 = 1.22 \times 10^{-4}$.
The remaining parameters were selected as
$c_1 = 0.5$ and $\alpha =0.03$. 
The time responses are shown in Fig.~\ref{fig3},
where the $x$-axis represents the 
sampling time $k$, and the $y$-axis 
the values of the agents. Moreover, the time 
instants when each node makes a broadcast are 
shown by the markers $\bullet$ in the color 
corresponding to that of its time response curve.
On the other hand, for Protocol 2, 
we chose $c_0 = 5.72 \times 10^{-3}$
according to Theorem~\ref{theorem2}, and
other parameters were taken as 
above with $c_1 = 0.5$ and 
$\alpha =0.03$. 
The time responses of Protocol~2 are plotted 
in Fig.~\ref{fig4}.  

We observe that both protocols managed to 
achieve the desired level of consensus 
specified by $c=1$ based on event-triggered communication.
Moreover, there is very little 
sign of being influenced by the behavior 
of the malicious nodes.
In fact, for Protocol~1, after 600 steps, 
the consensus error among the regular nodes became 
$5.24 \times 10^{-5}$, with 5.4 times of transmissions
on average for the regular nodes.
On the other hand, for Protocol~2, the consensus
error was $8.63 \times 10^{-3}$, with 4.6 times
of transmissions on average, which is slightly smaller. 
Thus, we confirm that Protocol~2 is
less conservative.

\begin{figure}[t]
\centering
\includegraphics[width=0.94\linewidth]{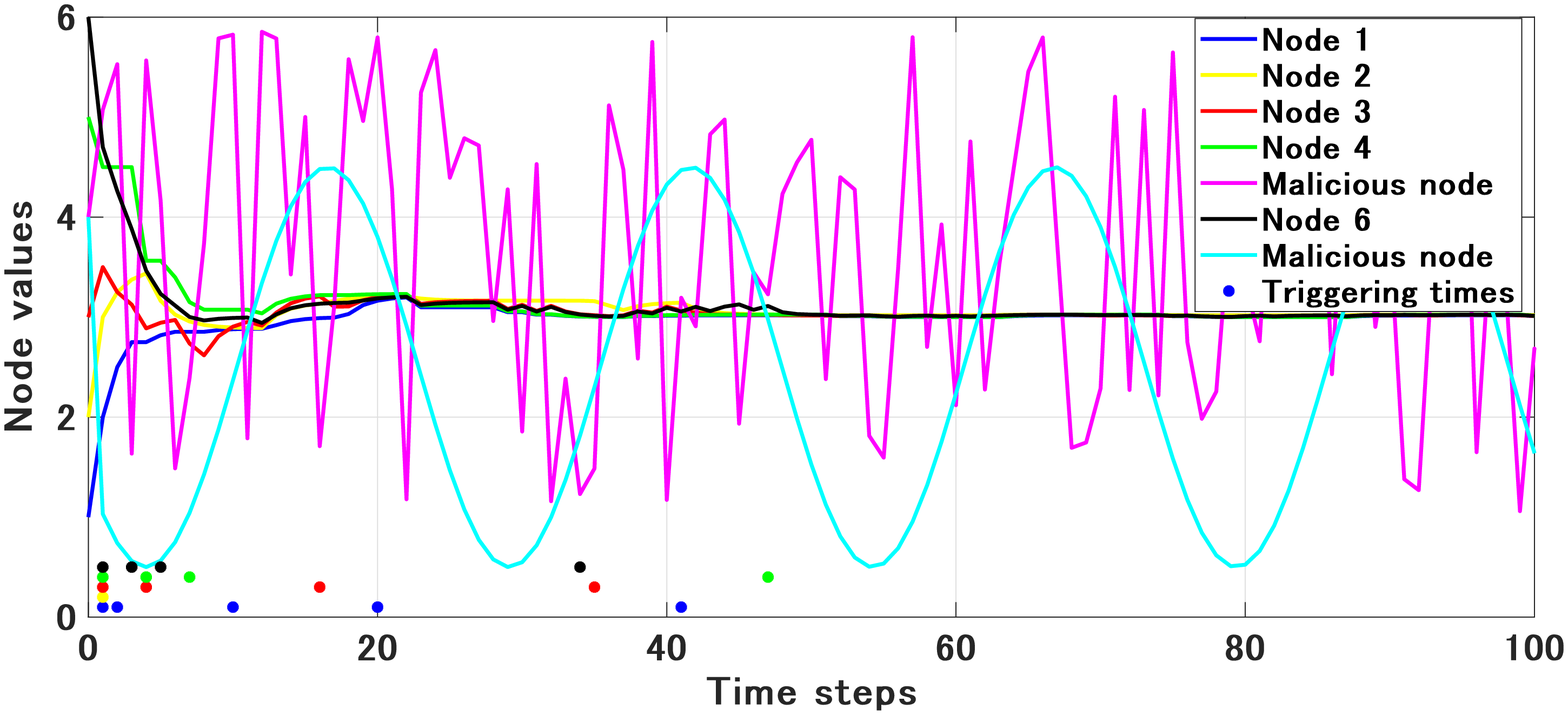}
\vspace*{-2.5mm}
\caption{Protocol~1 with 
$c_0 = 1.215 \times 10^{-4}$, $c_1 = 0.5$, 
and $\alpha = 0.03$}
\label{fig3}
\vspace*{2mm}
\includegraphics[width=0.94\linewidth]{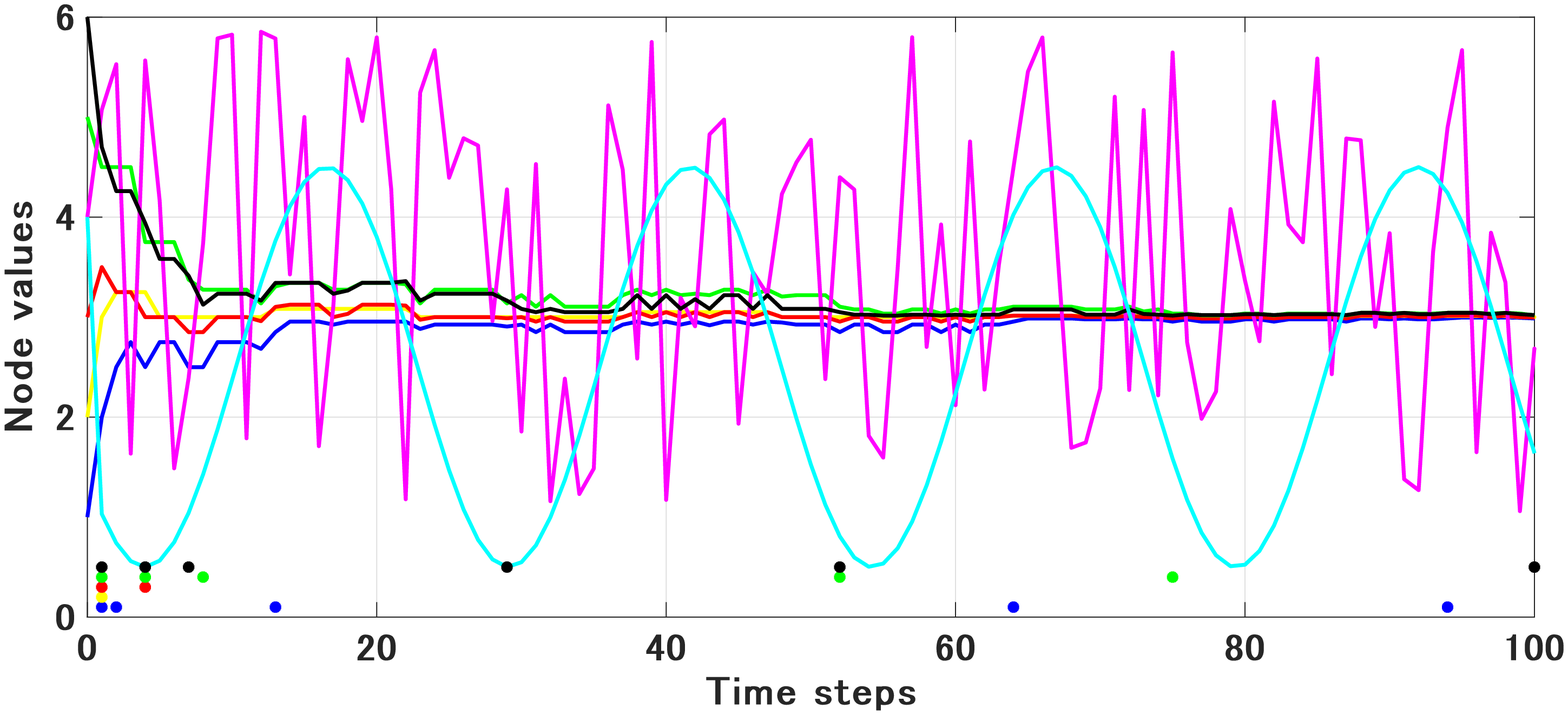}
\vspace*{-2.5mm}
\caption{Protocol~2 with
$c_0 = 5.72 \times 10^{-3}$, $c_1 = 0.5$, 
and $\alpha = 0.03$}
\label{fig4}
\vspace*{-4mm}
\end{figure}

Next, by setting $c_0=0$, 
we demonstrate that exact resilient consensus 
can be attained while
reducing the number of transmissions. 
To this end, for both protocols, we set
$c_1 = 0.5$ and $\alpha = 0.03$ as in the previous
simulations. In this case, the threshold that
determines the timings of events
eventually goes to zero (due to $c_0=0$).
The time responses of the two protocols are 
shown in Figs.~\ref{fig5} and \ref{fig6}.  
For Protocol~1, after 600 steps, 
the consensus errors among the regular nodes 
became essentially zero at $5.71 \times 10^{-9}$,
where the average number of
triggering times for the regular nodes is 10.
Similarly, for Protocol~2, the consensus error 
at time $k=600$ was $1.73 \times 10^{-8}$ with 
12.4 triggering times on average per regular node. 
Protocol~1 is particularly impressive
in terms of limited communication.
In contrast, for Protocol~2, information exchange
among nodes takes place for a longer time.

Further comparisons were made by implementing 
time-triggering communication in both protocols.
Periodic transmissions are made so that after
600 time steps, the regular nodes make the
same number of triggering times as those in the
event-triggered case with $c_0=0$ above. 
This means that for Protocol~1, each node transmits 
every 60 steps and for Protocol~2 every 50 steps. 
At time $k=600$, the consensus error was 
$5.04 \times 10^{-8}$ for Protocol~1 and
$5.80 \times 10^{-3}$ for Protocol~2.
It is clear that under both protocols, 
the event-triggered schemes perform better.
Their time responses are shown in Figs.~\ref{fig7} 
and~\ref{fig8}. Due to the periodic transmission, 
the convergence is slow and the responses between 
the transmission times are oscillatory.


\vspace*{-2mm}
\subsection{Scalability of the proposed approach}

In this part, we carry out a number of simulations 
to check the scalability of the proposed protocols 
using large-scale networks. In particular, we focus
on how the number of transmissions can be kept
low even if the numbers of neighbors and 
the malicious ones are large. 
As in the previous simulations, 
the two cases of $c_0>0$ and $c_0=0$ are examined
and compared with the time-triggered case.
We employ three complete graphs with 10, 
50, and 100 nodes. By Lemma~\ref{lemma1}, 
we know that a 10-node complete graph is 
$(5,5)$-robust. Thus, we introduce four malicious 
nodes. Similarly, in the 50- and 100-node cases,  
we set 24 and 49 malicious nodes, respectively.

\begin{figure}[t]
\centering
\includegraphics[width=0.94\linewidth]{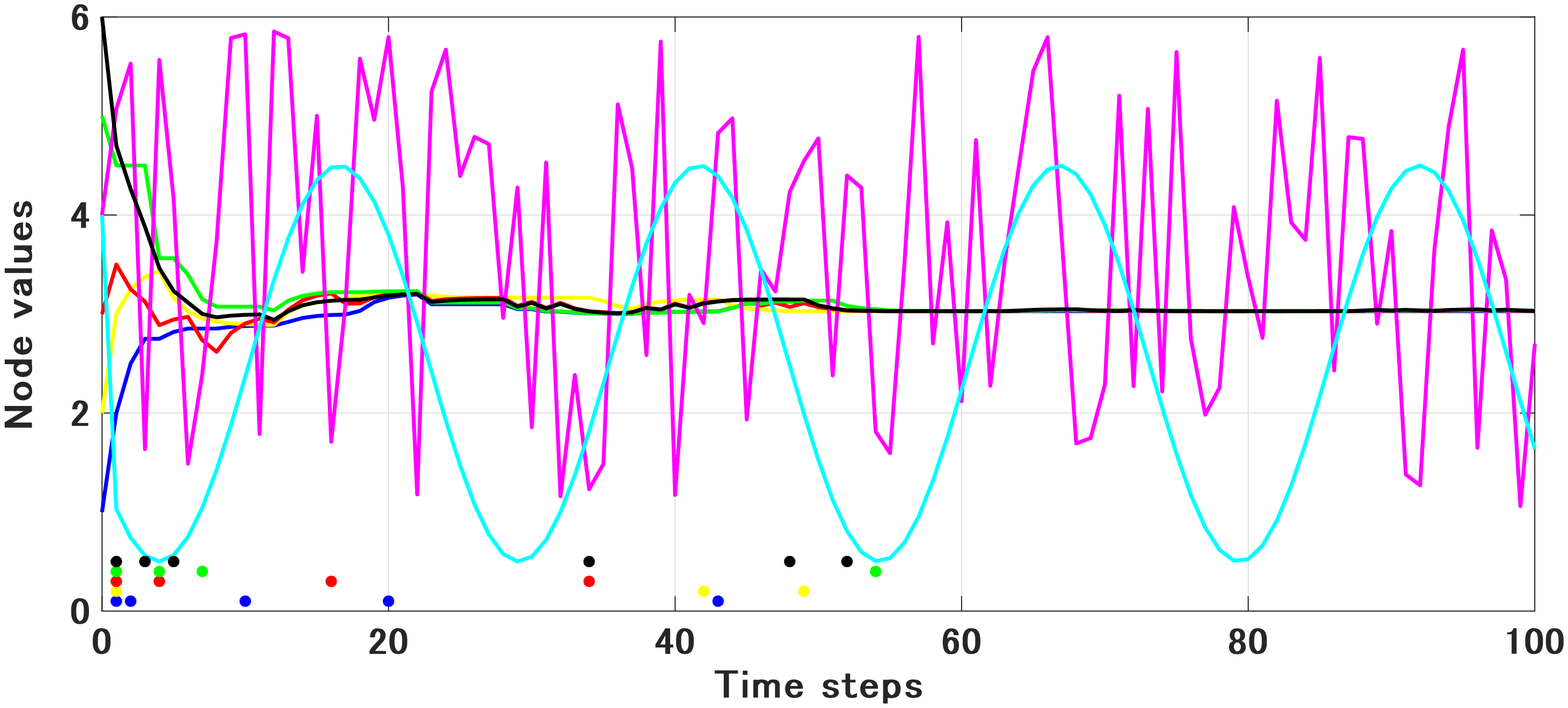}
\vspace*{-2mm}
\caption{Protocol~1 with $c_0 =0$, 
$c_1 = 0.5$, and $\alpha = 0.03$}
\label{fig5}
\vspace*{2mm}
\includegraphics[width=0.94\linewidth]{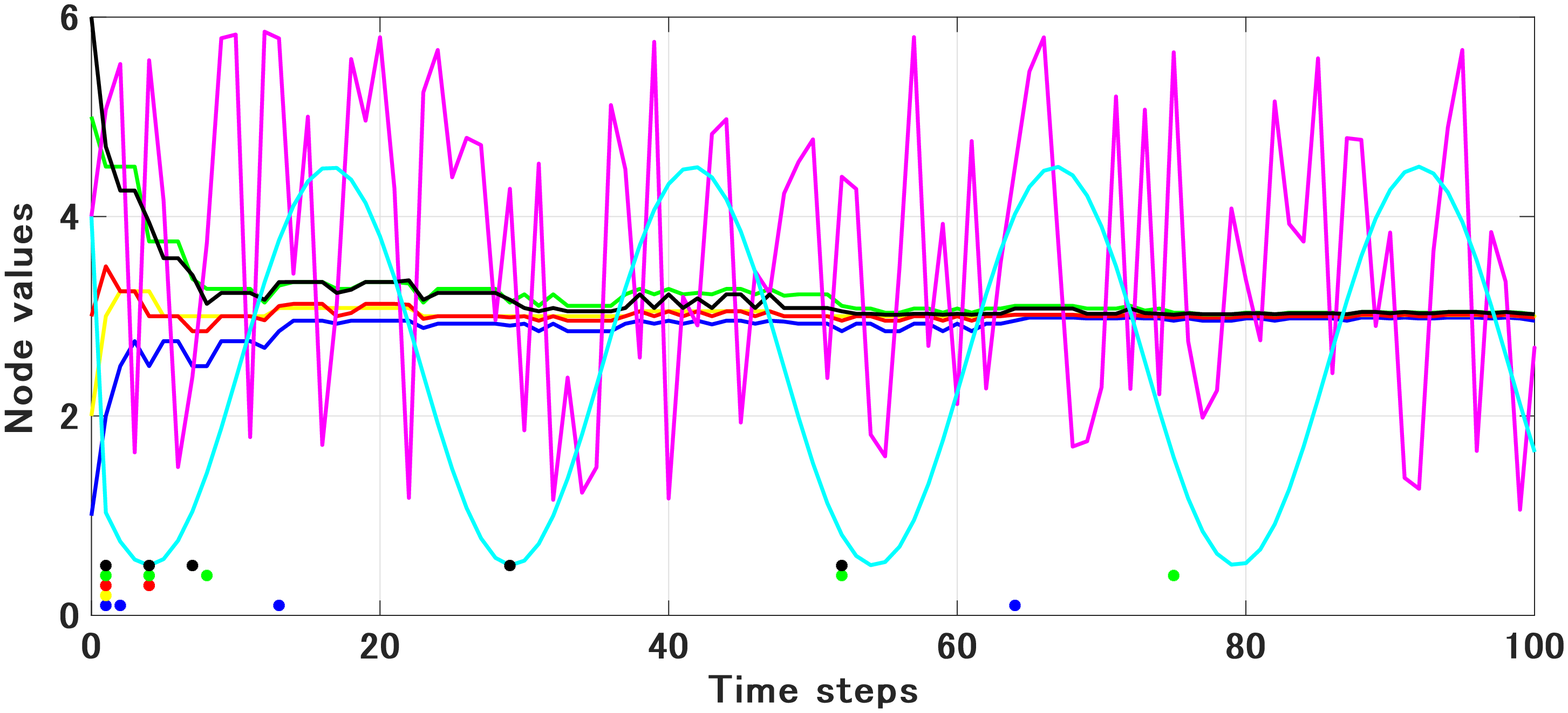}
\vspace*{-2mm}
\caption{Protocol~2 with 
$c_0 =0$, $c_1 = 0.5$, and $\alpha = 0.03$}
\label{fig6}
\vspace*{-4mm}
\end{figure}

The first case is with $c_0 > 0$. In particular, 
for both Protocols~1 and~2, 
we chose $c_0=0.1$, $c_1 = 1$, and $\alpha =2$.  
For each graph, we performed Monte Carlo simulations 
for 100 runs by randomly taking initial states under uniform
distribution between 0 and 100. 
Each agent made updates until the consensus error becomes
0.01 for Protocol~1 and 0.3 for Protocol~2.
The malicious agents made oscillatory behaviors as in
the simulations in the previous subsection.
The performance of Protocols~1 and~2 is
displayed in Tables~\ref{tab1}\,(a) and~(b), respectively,
in terms of
the average number of triggering times per regular node.

It is noticed that in general, as the number of
agents increases, triggering times increase only 
mildly to reach the same level of consensus error
for both protocols.
There is a difference in the achievable performance 
between the protocols as discussed after 
Theorem~\ref{theorem2}. In particular, for the
same size of $c_0$, Protocol~1 is able to yield 
smaller error than Protocol~2.

%
\begin{figure}[t]
\centering
\includegraphics[width=0.94\linewidth]{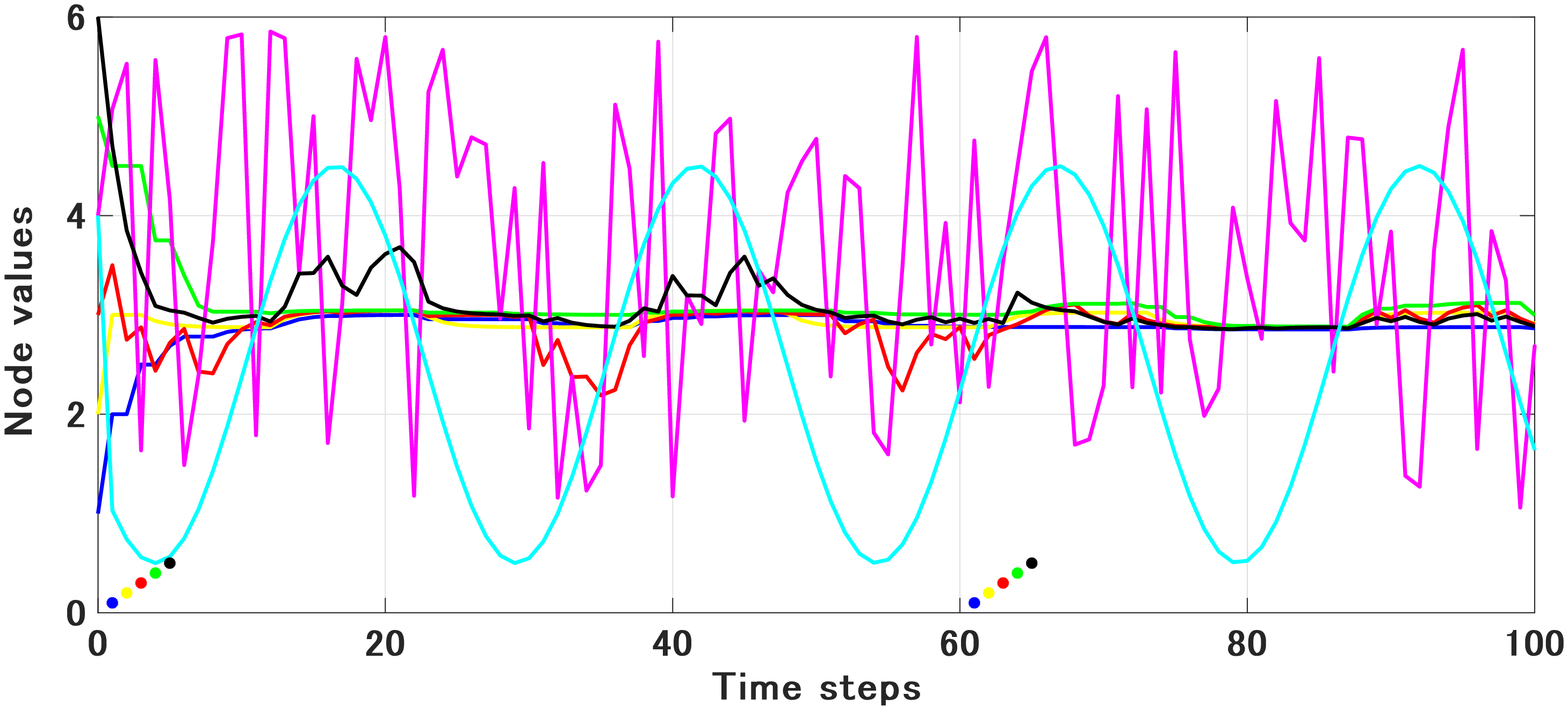}
\vspace*{-2mm}
\caption{Protocol~1 under periodic communication
with period 60}
\label{fig7}
\vspace*{2mm}
\includegraphics[width=0.94\linewidth]{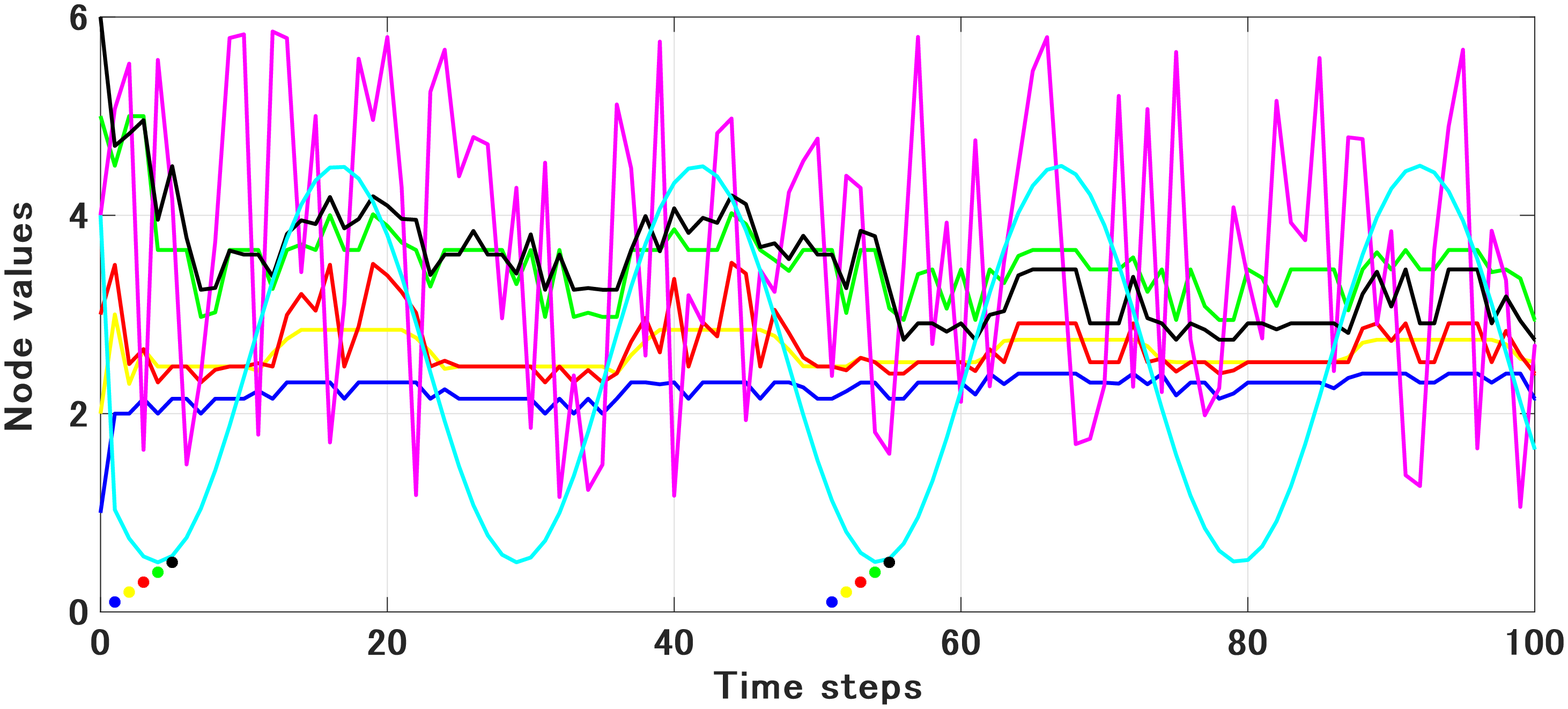}
\vspace*{-2mm}
\caption{Protocol~2 under periodic communication
with period 50}
\label{fig8}
\vspace*{-4mm}
\end{figure}

We proceed to the second case with $c_0=0$.
Other parameters were set as $c_1 = 0.5$ for both
protocols, and as $\alpha$, we used 
0.05 for Protocol~1 and 0.05 for Protocol~2.
The results are shown in the same tables.
The numbers of triggering times are similar
to the case with $c_0>0$.
For Protocol~2, we may say that the scalability 
is slightly less since as the graph sizes increase,
the triggering times increase more.

Finally, in the two tables, we display the
average number of triggering times for the 
time-triggered case, where every node transmits at 
every time step. 
It is evident that such a protocol requires more transmissions
than event-triggered protocols.
From these results, we conclude that 
the event-based protocols can efficiently 
eliminate the amount of communications.



\section{Conclusion}
\label{S:6}

In this paper, we considered a resilient approach for 
the multi-agent consensus problem to mitigate the influence 
of misbehaving agents due to faults and cyber-attacks. 
Two protocols for the updates of the regular nodes have 
been proposed, and their convergence properties as well 
as necessary network structures have been characterized. 
In both cases, resilient consensus can be achieved with 
reduced frequencies in communication among agents through 
event triggering. This is possible at the expense of 
certain errors in consensus determined by the parameters 
in the triggering function. 

Future studies will focus on 
resilient consensus algorithms with 
time delays in communications for the event-triggered case
and also those based on model predictive control. 
A more challenging problem for future research is to construct
algorithms enabling the regular nodes to reach 
a consensus value which is determined only by their initial 
values and not influenced by the adversaries. 


\begin{table}[t]
\centering
\caption{%
Average number of triggering times per regular node}
\vspace*{-3mm}
(a)~Protocol 1 with consensus error 0.01\\
\vspace*{1mm}
\begin{tabular}{|c||c|c|c|}
\hline
       & \multicolumn{2}{|c|}{Event-Triggered} 
       & Time- \\ \cline{2-3}
Graphs & $c_0=0.1$ & $c_0=0$ & Triggered \\ \hline
10 nodes  & 4.9 & 4.4 & 9.8 \\ \hline
50 nodes  & 6.5 & 5.4 & 11.4 \\ \hline
100 nodes & 7.1 & 5.7 & 11.9 \\ \hline
\end{tabular}
\label{tab1}
\vspace*{3mm}

\centering
(b)~Protocol 2 with consensus error 0.3\\
\vspace*{1mm}
\begin{tabular}{|c||c|c|c|}
\hline
       & \multicolumn{2}{|c|}{Event-Triggered} 
       & Time- \\ \cline{2-3}
Graphs & $c_0=0.1$ & $c_0=0$ & Triggered \\ \hline
10 nodes  & 4.7 & 3.8 & 6.9 \\ \hline
50 nodes  & 5.9 & 5.6 & 8.1 \\ \hline
100 nodes & 6.2 & 6.5 & 8.4 \\ \hline
\end{tabular}
\label{tab2}
\vspace*{-3mm}
\end{table}




\vspace*{-5mm}

\begin{IEEEbiography}{Yuan Wang}
received the M.Sc.\ degree in engineering from Huazhong University 
of Science and Technology, Wuhan, China in 2016. 
He is currently pursuing the Ph.D.\ degree at the Department of 
Computer Science, Tokyo Institute of Technology, Yokohama, Japan. 
His main research interests are cyber-physical systems, event-based 
coordination, security in multi-agent systems, and model predictive 
control methods.
\end{IEEEbiography}

\vspace*{-5mm}

\begin{IEEEbiography}[
]{Hideaki Ishii} (M'02-SM'12) received the M.Eng.\ degree in
applied systems science from Kyoto University, Kyoto, Japan,
in 1998, and the Ph.D.\ degree in electrical and computer
engineering from the University of Toronto, Toronto, ON,
Canada, in 2002.
He was a Postdoctoral Research Associate with 
the Coordinated Science Laboratory at the University 
of Illinois at Urbana-Champaign, Urbana, IL, USA,
from 2001 to 2004, and a Research Associate with the
Department of Information Physics and Computing,
The University of Tokyo, Tokyo, Japan, from 2004 to 2007.
Currently, he is an Associate Professor in the Department 
of Computer Science, Tokyo Institute of Technology, Yokohama, Japan.
His research interests are in networked control systems,
multi-agent systems, hybrid systems, cyber security of power systems,
and probabilistic algorithms.

Dr.~Ishii has served as an Associate Editor for the 
\emph{IEEE Control Systems Letters}, 
and \emph{Mathematics of Control, Signals, and Systems}
and previously for \emph{Automatica},
the \emph{IEEE Transactions on Automatic Control}, and
the \emph{IEEE Transactions on Control of Network Systems}. 
He is the Chair of the IFAC Coordinating Committee on Systems and Signals
since 2017 and was the Chair of the IFAC Technical Committee on 
Networked Systems from 2011 to 2017. 
He received the IEEE Control Systems Magazine Outstanding
Paper Award in 2015.
\end{IEEEbiography}

\end{document}